\documentclass[12pt]{article}
\usepackage[margin=1in]{geometry}
\usepackage{paralist}
\RequirePackage{amsthm,amsmath,amsfonts,amssymb}
\RequirePackage{algorithm,algorithmic}
\usepackage{thmtools}
\RequirePackage{xcolor}
\RequirePackage{bbm}
\RequirePackage{textcomp}
\RequirePackage{gensymb}
\RequirePackage{centernot}
\RequirePackage{bm}
\RequirePackage{enumerate}
\RequirePackage{enumitem}
\RequirePackage{graphicx}
\RequirePackage{multirow}
\RequirePackage{booktabs}
\RequirePackage{mathtools}
\RequirePackage[authoryear,round]{natbib}

\theoremstyle{plain}
\newtheorem{theorem}{Theorem}
\newtheorem*{theorem*}{Theorem}

\newtheorem*{claim*}{Claim}

\newtheorem*{lemma*}{Lemma}
\newtheorem{proposition}{Proposition}
\newtheorem*{proposition*}{Proposition}

\theoremstyle{remark}

\newtheorem{assumption}{Assumption}

\renewcommand\thmcontinues[1]{Continued}

\usepackage{crossreftools}
\makeatletter
\newcommand{\optionaldesc}[2]{%
  \phantomsection
  #1\protected@edef\@currentlabel{#1}\label{#2}%
}
\makeatother

\usepackage{hyperref}
\usepackage{caption} 

\newcommand{\eff}{\textnormal{eff}}
\newcommand{\expit}{\textnormal{expit}}

\newcommand{\rct}{\textnormal{rct}}
\newcommand{\obs}{\textnormal{obs}}

\newcommand{\indic}{\bm{1}}

\newcommand{\cf}{\mathcal{F}}
\newcommand{\ch}{\mathcal{H}}
\newcommand{\ci}{\mathcal{I}}

\newcommand{\cx}{\mathcal{X}}

\newcommand{\aipsw}{\textnormal{aipsw}}

\renewcommand{\leq}{\leqslant}

\renewcommand{\geq}{\geqslant}
\renewcommand{\Pr}{\textnormal{Pr}}

\newcommand{\e}{\mathbb{E}}
\newcommand{\real}{\mathbb{R}}

\newcommand{\var}{\mathrm{Var}}

\newcommand{\collab}{\textnormal{collab}}

\DeclareMathOperator*{\argmin}{arg\,min}

\title{Transporting treatment effects by calibrating large-scale observational outcomes}
\author{Harrison H. Li, Department of Mathematics, Harvey Mudd College}
\date{}

\begin{document}
\maketitle

\begin{abstract}
A high-quality experimental dataset is often much smaller than a corresponding observational dataset. When this holds with possibly biased measurements of the outcome of interest in the latter, we propose an estimation and inference procedure for a transported treatment effect. Our point estimator can be computed as follows. First, we estimate the conditional average treatment effect (CATE) by calibrating a treatment-control contrast estimated using the observational outcomes to the experimental dataset using ordinary least squares (OLS). Then, we compute the sample average of this estimated CATE over the observational dataset. We show that the limiting estimand is a weighted transported average treatment effect even when the OLS calibration is misspecified. Furthermore, our inference for this estimand is asymptotically valid and semiparametrically efficient when the size of the experimental dataset grows more slowly than the size of the observational dataset, regardless of the existence of positivity (overlap) between the two datasets. We illustrate the stable empirical performance of our method under varying degrees of positivity using numerical simulations and a data example using field experiments and satellite-based yield estimates to estimate the average effect of crop rotation on maize (corn) yields over a large area of the Midwestern United States.
\end{abstract}

\section{Introduction}
\label{sec:intro}

Randomized experiments are often conducted on subjects that differ greatly from the population of interest.
This creates a problem of external validity,
underscoring the need for methods to transport treatment effect estimates from the experimental population to a new target population defined by an auxiliary dataset~\citep{degtiar2023review}.

Standard methods for causal transportation are based on two identifying assumptions: mean exchangeability and positivity (also known as overlap).
Mean exchangeability assumes that the conditional means of the unobserved potential outcomes given observed covariates are the same between the experimental and target populations~\citep{dahabreh2020extending},
and positivity requires the support of the covariate distribution in the target population to be contained within that of the experimental population.
In other words, conditional on the covariates for an observation sampled uniformly at random from the superpopulation corresponding to the union of the experimental and observational datasets,
its probability of being from the experimental dataset ---
also known as its sampling propensity ---
needs to be bounded away from zero.
Due to the curse of dimensionality,
violations of positivity are likely when the experimental dataset is small and/or the covariate dimension is even moderate~\citep{d2021overlap}.
This is a highly relevant concern given that
experimentation is often expensive,
greatly limiting the sample size,
and that it is often not feasible to randomize treatment on the specific target population of interest.

As a motivating example,~\citet{kluger2022combining} sought to understand the causal effects of maize-soybean crop rotation on maize (corn) yields across a study region of interest in the Midwestern United States.
The authors compiled
a limited number of multi-year controlled field experiments of crop rotation across the study region,
though unfortunately, their scope did not fully cover the study region.
For instance, there were no available experiments at all in Indiana and Michigan (Fig.~\ref{fig:map}).
To address this,
the authors supplemented the experimental data with much more abundant satellite estimates of crop yields covering the entire study region.
Specifically, they fit a causal forest~\citep{wager2018estimation} using these satellite yield estimates as the outcome to estimate the conditional average treatment effect (CATE) of crop rotation as a function of location, year, and weather and soil covariates. 
Since these CATE estimates may be biased due to unobserved confounders and potential systematic errors in the satellite-based yield estimates,
they were calibrated via linear regression using the limited experimental dataset.

We propose taking a simple average of these calibrated treatment effect estimates over the covariates in the observational dataset to estimate an aggregate transported treatment effect.
This avoids the need to perform inverse propensity weighting and thus remains stable even under severe positivity violations.
While this procedure may seem naive,
we show it is both theoretically principled and empirically performant in the common setting with limited experimental data and plentiful observational data.
Specifically,
when the size of the experimental dataset grows sufficiently slowly relative to the size of the observational dataset,
the procedure enables valid asymptotic inference for a new estimand that remains identified in the absence of positivity.
Under mean exchangeability,
this estimand is precisely equal to the standard transported ATE for the target population when treatment effect heterogeneity estimates in the large observational dataset can be effectively calibrated using the small experimental dataset via a parametric correction.
By contrast,
popular existing proposals for alternative estimands in the face of positivity violations systematically downweight observations in the target population with covariate values that are unlikely (or impossible) to occur in the experimental dataset.

\begin{figure}
\centering
\includegraphics[width=0.8\linewidth]{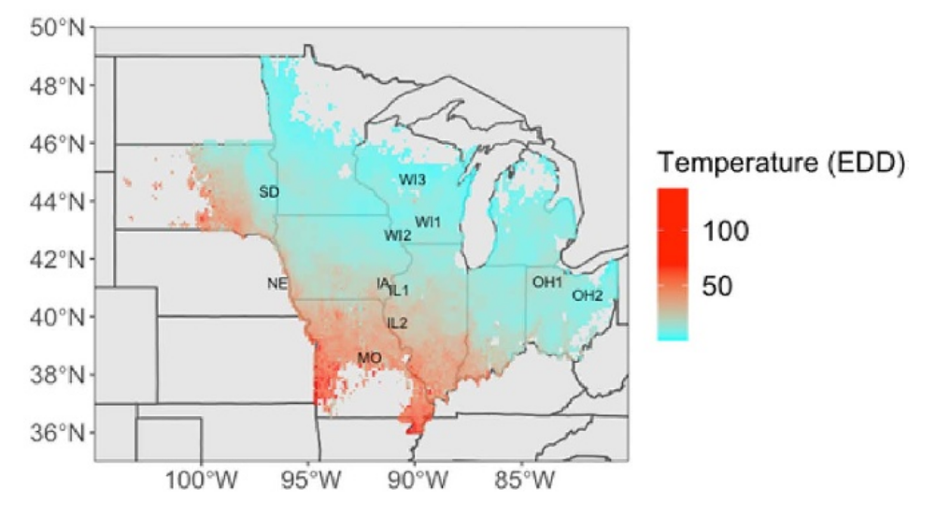}
\caption{A map from Fig. 1 of~\citet{kluger2022combining} showing the 11 locations with available data from crop rotation field experiments (letters), along with colored dots showing the locations and average number of extreme degree days (EDD) of extensive satellite observations spanning the region of interest.}
\label{fig:map}
\end{figure}

\subsection{Related work}
For comprehensive reviews of recent work on causal generalization and transportation as well as related data fusion tasks,
we refer the reader to~\citet{colnet2024causal} and~\citet{rosenman2025methods}.
Here we briefly discuss some related work that addresses our setting with highly unbalanced datasets (i.e., a much smaller experimental dataset compared to the observational datasets) and/or positivity violations.

One important line of work has sought modifications to inverse propensity weighting to make it less sensitive to positivity violations,
without changing the estimand of interest.
Many of these proposals work in the setting of estimating the ATE from a single unconfounded observational dataset (rather than causal transportation),
in which case positivity corresponds to overlapping supports of the covariate distributions of the treated and untreated subjects.
The ideas can be readily adapted to the causal transportation problem;
here we summarize them using the language of causal transportation.
For example,
balancing weights~\citep{li2018balancing, ben2021balancing, josey2022calibration}
and Riesz regression~\citep{chernozhukov2021automatic,chernozhukov2022riesznet,lee2025rieszboost}
have been proposed to make inverse propensity weights more stable.
When outcome data is available in the observational dataset,
as in our setting,
a predictive model can be learned from the external dataset to attempt to explain some of the outcome variance in the main dataset,
thereby increasing the precision of inverse propensity weighting estimators under both design-based~\citep{gagnon2023precise} and superpopulation-based~\citep{huang2023leveraging, liao2025prognostic} frameworks.
Another recent idea (proposed specifically for causal transportation) is to use a ``collaborative" propensity score that is a univariate function of the outcome regression function in place of the actual propensity score which is a function of all covariates~\citep{rudolph_improving_2025}.
This mitigates the extent to which the curse of dimensionality exacerbates positivity violations.

All of the above approaches, however,
are designed to address minor or practical positivity violations rather than population level positivity violations.
Fundamentally,
the average causal effect in the target population is no longer identified when positivity does not hold in the population,
making it is necessary to change the estimand,
as we do in the present work.
Existing proposals for estimands from the ATE literature in the absence of positivity trim and/or downweight observations with extreme propensities either explicitly~\citep{crump2009dealing,li2019addressing,khan2024doubly, huang2025overlap}
or implicitly through modifications to balancing weights~\citep{barnard2025partiallyretargetedbalancingweights}.
In this spirit,~\citet{parikh2025we} propose a decision-tree based algorithm for causal transportation to dynamically learn a subpopulation with sufficiently high sampling propensities (as well as low outcome variance),
and report an ATE on this subpopulation.
While the use of decision trees ensures this subpopulation is interpretable,
the proposed algorithm still requires population-level positivity,
and the resulting inference procedure currently does not account for the complex uncertainty introduced by learning the subpopulation in a data-dependent way.
By contrast,
we rely on the existence of an external observational dataset to target an alternative estimand that is not data-dependent and maintains the scientific interest of equally weighting all covariate observations in the target population when the calibration is correctly specified,
while maintaining interpetability as a weighted average treatment effect with a misspecified calibration.

The problem of transferring information learned from a small dataset to a much larger dataset has also been extensively studied outside causal inference.
Indeed, it is the central problem of semi-supervised learning,
also recently referred to as ``prediction-powered inference."
In that literature,
the small dataset is labeled with both covariates and the outcome of interest while the large dataset is unlabeled,
containing just covariates,
and the goal is to use the unlabeled data to improve estimation of some feature of the outcome distribution,
such as the mean.
There has been significant attention to the asymptotic regime where the ratio of the sample size of the labeled dataset to that of the unlabeled dataset approaches zero,
though mostly under the assumption that the covariate distribution is the same between the labeled and unlabeled datasets~\citep{zhang2019semi, zhang2022high, song2024general, angelopoulos2023prediction, angelopoulos2023ppi++, zrnic2024cross, xu2025unified},
in which case positivity is not a concern.
However,
more recently there has been rising interest in the case where there is unknown covariate shift and ``decaying positivity" as the sample sizes grow~\citep{zhang2023double, su2024solving, testa2025semiparametric}.
While the methods in these works still require some degree of positivity
as they aim to target the original estimand,
they uncover valuable theoretical insights by specifying precise positivity assumptions needed in a non-standard regime for inverse propensity weighted methods to enable valid asymptotic inference.
We find in our simulations and data example, however, that such theoretical guarantees can be delicate in practice.

We also mention some related work to improve the efficiency of causal estimators via data fusion by making structural, low-dimensional parametric assumptions linking multiple datasets~\citep{kallus2018removing, li2024efficient, yang2024data, li2025data,guan2025data, wu2025comparative}.
A key distinction of the present work is that we fully investigate the behavior of our estimator when such assumptions are not satisfied.

The remainder of this paper is structured as follows.
Section~\ref{sec:estimand} mathematically specifies the setting of interest and studies the estimand $\bar{\tau}$ targeted by our proposed estimation and inference procedure,
which is described in detail in Section~\ref{sec:estimation}.
We provide a careful discussion of semiparametric efficiency in Section~\ref{sec:discussion},
showing that our procedure attains the efficiency bound 
for our estimand $\bar{\tau}$ under an asymptotic regime where the size of the observational dataset grows more quickly than the size of the experimental dataset.
We validate the performance of our methodology using a numerical simulation and data example based on the crop rotation dataset used by~\citet{kluger2022combining} in Section~\ref{sec:simulations}.
Finally, Section~\ref{sec:conclusion} concludes.

\section{An alternative estimand}
\label{sec:estimand}
Our formal setup assumes observing experimental samples $\{(D_i,X_i)\}_{i=1}^n$,
that are independent and identically distributed (i.i.d.) alongside i.i.d. observational triples $\{(Y_i,Z_i,X_i)\}_{i=n+1}^{n+N}$ independent of the experimental data.
Here, $X_i \in \cx \subseteq \real^p$ is a vector of observed covariates,
$D_i$ is an experimental observation of the treatment effect,
$Y_i$ is the observational outcome of interest, and
$Z_i$ is a binary treatment indicator in the observational dataset.
In the crop rotation example from Section~\ref{sec:intro},
$D_i$ is the yield in a rotated field minus the yield in a paired adjoining field that was not rotated,
$X_i$ is a vector of location, year, weather, and soil covariates (observed in both the experimental and observational datasets),
$Y_i$ is the satellite-based yield estimate,
and $Z_i$ is an indicator of whether the field underwent crop rotation.
Throughout,
we let $f_{\rct}$ and $f_{\obs}$ denote the densities (with respect to some common dominating measure $\lambda$) for the distributions of the covariates $X_i$ in the experiment and in the observational dataset, respectively.
Subscripts $\rct$ and $\obs$ on probabilities and expectations indicate they are to be taken under the distributions specified by the densities $f_{\rct}$ and $f_{\obs}$, respectively.
We assume throughout that $f_{\obs}$ dominates $f_{\rct}$:

\begin{assumption}
\label{assump:dominance}
$f_{\rct}(x)=0$ for $\lambda$-almost all $x \in \cx$ such that $f_{\obs}(x)=0$.
\end{assumption}

Assuming that treatment is assigned independently of the outcomes $Y_i$ given the covariates $X_i$,
we can extend all of our results to experiments where treatment and control observations are not paired
(i.e., the experimental dataset has observations of $(Y_i,Z_i)$ in place of $D_i$)
 by setting
\[
D_i = \frac{Z_iY_i}{e(X_i)} - \frac{(1-Z_i)Y_i}{1-e(X_i)}
\]
where
\[
e(x) = \Pr_{\rct}(Z=1 \mid X=x)
\]
is the propensity score in the experimental dataset
(typically constant, and known since it is controlled by the experimenter).

We would like to estimate the ATE in the target population defined by the observational dataset.
Under the mean exchangeability assumption that $\e_{\rct}[D \mid X] = \e_{\obs}[D \mid X]$
(note $D$ is not observed in the observational dataset,
so this is untestable),
this ideal estimand is identified as
\[
\tau = \e_{\obs}[\mu(X)]
\]
where
\[
\mu(x) = \e_{\rct}[D \mid X=x]
\]
is the CATE.
The ``transported ATE"
$\tau$ averages the CATE $\mu(x)$ from the experimental dataset
over the covariate distribution of the observational dataset.

The estimand $\tau$, however, is only identified under positivity:

\begin{assumption}[Positivity]
\label{assump:positivity}
$f_{\obs}(x)=0$ for $\lambda$-almost all $x \in \cx$ such that $f_{\rct}(x)=0$.
\end{assumption}
When Assumption~\ref{assump:positivity} does not hold,
for any $x$ where $f_{\obs}(x)>0$ but $f_{\rct}(x)=0$,
$\mu(x)$ is not well-defined,
and so $\tau$ is not well-defined.
To rectify this,
we target an alternative estimand
\begin{equation}
\label{eq:tau_bar}
\bar{\tau} = \e_{\obs}[\bar{\mu}(X)], \quad \bar{\mu}(\cdot) = \argmin_{f \in \cf} \e_{\rct}[(D-f(X))^2] = \argmin_{f \in \cf} \e_{\rct}[(\mu(X)-f(X))^2]
\end{equation}
where for a known basis expansion $\psi:\real \rightarrow \real^p$,
\[
\cf = \{x \mapsto \beta^{\top}\psi(\Delta(x)), \beta \in \real^p \}
\]
is the set of all linear transformations of $x \mapsto \psi(\Delta(x))$
for
\[
\Delta(x) = \e_{\obs}[Y \mid X=x,Z=1] - \e_{\obs}[Y \mid X=x,Z=0].
\]
the ``treatment-control contrast" in the observational dataset.
We do not assume $\Delta=\mu$ due to the possibility of unobserved confounding and/or other potential sources of bias,
such as systematic errors in the satellite estimates of yield for the crop rotation dataset.
Note that every function in $\cf$ has domain that must include all $x$ in the support of $f_{\obs}$,
since $\Delta(x)$ is defined for all such $x$.
Hence, the estimands
$\bar{\mu}(\cdot)$ and $\bar{\tau}$ remain well-defined even when Assumption~\ref{assump:positivity} does not hold.

In lieu of averaging the true CATE $\mu(\cdot)$ over the observational covariate distribution,
the estimand $\bar{\tau}$ averages $\bar{\mu}(\cdot)$,
the best approximation in mean square (with respect to the experimental covariate distribution specified by the density $f_{\rct}$) to $\mu$ within the function class $\cf$.
Concretely, if $\psi(x) = (1,x)^{\top}$,
then the function $\bar{\mu}(\cdot)$ can be interpreted as the limiting prediction function learned by a simple linear regression of $D_i$ onto $\Delta(X_i)$ as the experimental sample size $n$ grows,
and the estimand $\bar{\tau}$ is simply the average of the predictions from this regression over the target population.

It is useful to note that if the components of $\psi(\Delta(X))$ are not linearly dependent w.p.1. in the RCT,
then the first-order conditions for $\bar{\mu}$ imply
\begin{equation}
\label{eq:mu_bar_explicit}
\bar{\mu}(x) = \psi(\Delta(x))^{\top}\bar{\beta}, \qquad \bar{\beta}=\left(\e_{\rct}[\psi(\Delta(X))\psi(\Delta(X))^{\top}]\right)^{-1}\e_{\rct}[\mu(X)\psi(\Delta(X))].
\end{equation}

Proposition~\ref{prop:wate} provides one justification for the estimand $\bar{\tau}$: under positivity, it is a weighted average of the CATE over the observational dataset.
\begin{proposition}
\label{prop:wate}
Assume that $1 \in \psi(\cdot)$ and Assumptions~\ref{assump:dominance} and~\ref{assump:positivity} hold with
\[
\Pr_{\rct}(c^{\top} \psi(\Delta(X))=0) < 1, \quad \forall c \in \real^p \setminus \{0\}.
\]
Then there exists $w(\cdot)$ with $\e_{\obs}[w(X)]=1$ such that 
\[
\bar{\tau} = \e_{\obs}[w(X)\mu(X)].
\]
\end{proposition}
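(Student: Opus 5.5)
The plan is to make $\bar{\tau}$ explicit through \eqref{eq:mu_bar_explicit} and then move the averaging from the observational covariate distribution onto the experimental one. The density ratio $f_{\obs}/f_{\rct}$ is available because positivity holds.

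First, the condition $\Pr_{\rct}(c^{\top}\psi(\Delta(X))=0)<1$ for all $c\neq 0$ says that $\Sigma := \e_{\rct}[\psi(\Delta(X))\psi(\Delta(X))^{\top}]$ is positive definite. Indeed, $c^{\top}\Sigma c = \e_{\rct}[(c^{\top}\psi(\Delta(X)))^2]>0$. Hence \eqref{eq:mu_bar_explicit} applies, and
\[
\bar{\tau} = \e_{\obs}[\psi(\Delta(X))]^{\top}\Sigma^{-1}\e_{\rct}[\mu(X)\psi(\Delta(X))] = \e_{\rct}[v(X)\mu(X)], \quad v(x) := \e_{\obs}[\psi(\Delta(X))]^{\top}\Sigma^{-1}\psi(\Delta(x)).
\]

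Second, I convert the $\e_{\rct}$ into an $\e_{\obs}$. Under Assumptions~\ref{assump:dominance} and~\ref{assump:positivity}, $f_{\rct}$ and $f_{\obs}$ have the same support up to $\lambda$-null sets. The ratio $r(x)=f_{\rct}(x)/f_{\obs}(x)$ is therefore well defined $f_{\obs}$-a.e., and $\e_{\rct}[g(X)]=\e_{\obs}[r(X)g(X)]$ for integrable $g$. I set $w(x)=r(x)v(x)$. Then $\bar{\tau}=\e_{\obs}[w(X)\mu(X)]$, where $\mu$ is now well defined on the support of $f_{\obs}$ precisely because of positivity.

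Third, I verify the normalization. We have $\e_{\obs}[w(X)]=\e_{\rct}[v(X)]=\e_{\obs}[\psi(\Delta(X))]^{\top}\Sigma^{-1}\e_{\rct}[\psi(\Delta(X))]$. Since $1\in\psi(\cdot)$, say $\psi_1\equiv 1$, the vector $\e_{\rct}[\psi(\Delta(X))]$ is the first column of $\Sigma$, i.e. $\Sigma e_1$. So $\Sigma^{-1}\e_{\rct}[\psi(\Delta(X))]=e_1$, and the expression reduces to $\e_{\obs}[\psi_1(\Delta(X))]=1$.

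The main obstacle is not algebraic but one of integrability and measure-theoretic bookkeeping. I must make sure the relevant moments are finite so that all expectations exist, implicitly assumed when writing \eqref{eq:mu_bar_explicit}. I must also confirm that the change of measure is legitimate on the common support, which is exactly where Assumption~\ref{assump:positivity} is used. I would state the needed finite second moments as standing conditions and otherwise follow the computation above.
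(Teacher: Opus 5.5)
Your proof is correct, and it takes a genuinely different route from the paper's. You write $\bar{\tau}$ in closed form as $\e_{\rct}[v(X)\mu(X)]$, using the OLS ``implied weights''
\[
v(x)=\e_{\obs}[\psi(\Delta(X))]^{\top}\left(\e_{\rct}[\psi(\Delta(X))\psi(\Delta(X))^{\top}]\right)^{-1}\psi(\Delta(x)).
\]
You then set $w=\Lambda v$ and get the normalization from the intercept.

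The paper instead perturbs the uniform weight. It takes $w(x)=1-\Lambda(x)(\mu(x)-\bar{\mu}(x))\psi(\Delta(x))^{\top}\gamma$ and gets $\e_{\obs}[w(X)]=1$ from the first-order conditions $\e_{\rct}[(\mu(X)-\bar{\mu}(X))\psi(\Delta(X))]=0$. It then chooses $\gamma$ to solve $b^{\top}\gamma=\e_{\obs}[\mu(X)-\bar{\mu}(X)]$. The degenerate case $b=0$ needs a separate argument, which is the only place the paper uses the intercept; there $\mu=\bar{\mu}$ and $w\equiv 1$.

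Your construction is more elementary in several ways:
\begin{itemize}
\item it gives one explicit weight with no case split;
\item the weights depend only on $\Lambda$ and $\Delta$, not on the unknown CATE;
\item it needs less integrability. By Cauchy--Schwarz, second moments of $\psi(\Delta(X))$ and $\mu(X)$ under the RCT plus $\e_{\obs}\|\psi(\Delta(X))\|<\infty$ suffice. The paper's expansion needs $\tau=\e_{\obs}[\mu(X)]$ and $b$ to be finite, and neither follows from second-moment conditions.
\end{itemize}

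What the paper's route buys is a $(p-1)$-dimensional affine family of valid weights. This family collapses to $w\equiv 1$ whenever the calibration is correctly specified ($\mu\in\cf$). Its freedom in $\gamma$ is also used in Section~\ref{sec:1d_sim}, via \eqref{eq:gamma}, to pick the variance-minimizing weights. Your weights reduce to $1$ when $f_{\rct}=f_{\obs}$, but in general not when $\mu\in\cf$ under covariate shift.

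One small correction: the identity $\e_{\rct}[g(X)]=\e_{\obs}[\Lambda(X)g(X)]$ needs only Assumption~\ref{assump:dominance}. Positivity is needed only so that $\mu$, and hence $w\mu$, is defined $f_{\obs}$-almost everywhere. Since your $w$ vanishes wherever $f_{\rct}=0$, your argument relies on positivity even less than the paper's does.
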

\begin{proof}
Let $\Lambda(x)=f_{\rct}(x)/f_{\obs}(x)$ 
and note that
\begin{align*}
& \e_{\obs}\left[\mu(X)\left(1-\Lambda(X) \cdot [\mu(X)-\bar{\mu}(X)] \cdot \psi(\Delta(X))^{\top}\gamma\right)\right] \\
& \quad = \bar{\tau} + \e_{\obs}[\mu(X)-\bar{\mu}(X)] - \left(\e_{\obs}\left[\Lambda(X) \cdot \mu(X) \cdot (\mu(X)-\bar{\mu}(X)) \cdot \psi(\Delta(X))\right]\right)^{\top}\gamma.
\end{align*}
for all $\gamma \in \real^p$. It follows that if $\gamma$ satisfies
\begin{equation}
\label{eq:weight_constraint}
b^{\top}\gamma = \e_{\obs}[\mu(X)-\bar{\mu}(X)], \quad b = \e_{\obs}\left[\Lambda(X) \cdot \mu(X) \cdot (\mu(X)-\bar{\mu}(X))\psi(\Delta(X))\right]
\end{equation}
then $\bar{\tau}=\e_{\obs}[w(X)\mu(X)]$ for
\begin{equation}
\label{eq:w}
w(x) = 1-\Lambda(x) \cdot (\mu(x)-\bar{\mu}(x)) \cdot \psi(\Delta(x))^{\top}\gamma,
\end{equation}
As the first-order conditions defining $\bar{\mu}(\cdot)$ ensure that
\begin{equation}
\label{eq:mu_bar_foc}
\e_{\rct}[(\mu(X)-\bar{\mu}(X))\psi(\Delta(X))] = 0,
\end{equation}
we have
\[
0 = \left(\e_{\rct}\left[(\mu(X)-\bar{\mu}(X)) \cdot \psi(\Delta(X))\right]\right)^{\top}\gamma=\left(\e_{\obs}\left[\Lambda(X) \cdot (\mu(X)-\bar{\mu}(X)) \cdot \psi(\Delta(X))\right]\right)^{\top}\gamma
\]
and hence $\e_{\obs}[w(X)]=1$.

Thus, we have completed the proof so long as there exists $\gamma$ satisfying~\eqref{eq:weight_constraint}. 
So long as $b \neq 0$,
there will be a $(p-1)$-dimensional affine subspace of parameters $\gamma \in \real^p$ satisfying the equality constraint~\eqref{eq:weight_constraint}.
In the case that $b=0$,
since $1 \in \psi(\cdot)$ (i.e., $\psi(\cdot)$ contains an intercept term) we must have
\[
0 = \e_{\obs}[\Lambda(X) \cdot \mu(X) \cdot (\mu(X)-\bar{\mu}(X))] = \e_{\rct}[\mu(X) \cdot (\mu(X)-\bar{\mu}(X))] =\e_{\rct}[(\mu(X)-\bar{\mu}(X))^2]
\]
where the last equality follows by the first order conditions~\eqref{eq:mu_bar_foc}.
Hence $b=0$ implies $\mu(X)=\bar{\mu}(X)$ w.p.1. in the RCT,
which further implies $\mu(X)=\bar{\mu}(X)$ w.p.1. in the observational dataset since $\bar{\mu}$ is uniquely defined by~\eqref{eq:mu_bar_explicit} when no components of $\psi(\Delta(X))$ are linearly dependent w.p.1. in the RCT. 
Thus, in the case $b=0$, Proposition~\ref{prop:wate} holds simply with $w(x)=1$, i.e., in that case $\bar{\tau}=\tau$.
\end{proof}

As shown in the proof of Proposition~\ref{prop:wate} above,
the weights on the CATE defining $\bar{\tau}$ typically depend on the sampling propensity.
This is common in the literature on estimating the ATE in a single unconfounded observational study under positivity violations;
a leading example is given by the ``overlap" weights proportional to $e(x)(1-e(x))$,
which can be estimated with minimal variance among all propensity weighted-averages of CATEs~\citep{li2018balancing, li2019addressing,wang2025ratedoublyrobustestimation}.
It is easy to see from Proposition~\ref{prop:wate} that our estimand
$\bar{\tau}$ inherits a desirable property of these propensity-weighted average treatment effects in the causal transportation setting: it reduces to the unweighted average treatment effect $\tau$ when the (sampling) propensity is constant,
i.e., $f_{\rct}(\cdot)=f_{\obs}(\cdot)$,
so that there is no covariate shift between the observational and experimental datasets.
To see this, note that 
the first order conditions for $\bar{\mu}$ given by~\eqref{eq:mu_bar_foc}
imply $\e_{\rct}[\mu(X)]=\e_{\rct}[\bar{\mu}(X)]$ when $1 \in \psi(\cdot)$.
Thus, when additionally $f_{\rct}(\cdot)=f_{\obs}(\cdot)$,
we have
\[
\bar{\tau}=\e_{\obs}[\bar{\mu}(X)]=\e_{\rct}[\bar{\mu}(X)]=\e_{\rct}[\mu(X)]=\e_{\obs}[\mu(X)]=\tau.
\]

Like any WATE,
$\bar{\tau}$ is also evidently equal to $\tau$ when the CATE $\mu(\cdot)$ is constant.
By construction, there is an additional reasonable condition under which $\bar{\tau}=\tau$: when the CATE $\mu(\cdot)$ can be calibrated via a low-dimensional parametric correction of the observational conditional mean contrast $\Delta(\cdot)$, i.e., $\mu \in \cf$ so that $\bar{\mu}=\mu$.

\section{Estimation and inference}
\label{sec:estimation}

Our proposed estimation and inference procedure for the estimand $\bar{\tau}$ is simple and intuitive:
\begin{enumerate}
\item Partition the set of observational indices $\{n+1,\ldots,n+N\}$ randomly and as evenly as possible into $K$ folds $\ci_1,\ldots,\ci_K$ \label{item:cross-fitting}
\item Compute \emph{cross-fit} estimates $\hat{\Delta}^{(-1)}(\cdot),\ldots,\hat{\Delta}^{(-K)}(\cdot)$ of the treatment-control contrast $\Delta(\cdot)$ in the observational dataset, where $\hat{\Delta}^{(-k)}(\cdot)$ is computed by applying a CATE learner to all samples in the observational dataset \emph{outside} fold $k=1,\ldots,K$. Then let $\hat{\Delta}(\cdot) = K^{-1} \sum_{k=1}^K \hat{\Delta}^{(-k)}(\cdot)$. \label{item:Delta_estimate}
\item Compute estimates $\hat{\bar{\mu}}(\cdot)=\hat{\bar{\beta}}^{\top}\psi(\hat{\Delta}(\cdot))$ and $\hat{\bar{\mu}}^{(-k)}(\cdot) = \hat{\bar{\beta}}^{\top}\psi(\hat{\Delta}^{(-k)}(\cdot)), k=1,\ldots,K$ of $\bar{\mu}(\cdot)$ where $\hat{\bar{\beta}}$ estimates $\bar{\beta}$ in~\eqref{eq:mu_bar_explicit} by predicting $D_i$ from $\psi(\hat{\Delta}(X_i))$ in the experimental dataset using ordinary least squares \label{item:linear_calibration}
\item Average the out-of-fold estimates of $\bar{\mu}$ evaluated at the covariates in the observational dataset:
\begin{equation}
\label{eq:tau_bar_hat}
\hat{\bar{\tau}} = \frac{1}{N} \sum_{k=1}^K \sum_{i \in \ci_k} \hat{\bar{\mu}}^{(-k)}(X_i).
\end{equation}
\label{item:tau_bar_hat}
\item Return $\hat{\bar{\tau}}$ along with the confidence interval $\hat{\bar{\tau}} \pm z_{1-\alpha/2} \sqrt{\hat{V}_{n,N}}$ where
\begin{equation}
\label{eq:V_hat}
\hat{V}_{n,N} = \hat{a}_{n,N}^{\top} \left(\frac{1}{n^2} \sum_{i=1}^n \hat{r}_i\hat{r}_i^{\top}\right) \hat{a}_{n,N} + \frac{1}{N^2} \sum_{k=1}^K \sum_{i \in \ci_k} (\hat{\bar{\mu}}^{(-k)}(X_i)-\hat{\bar{\tau}})^2
\end{equation}
and $z_{1-\alpha/2}$ is the $1-\alpha/2$ quantile of the standard normal distribution with
\begin{align*}
\hat{a}_{n,N} & = \left(\frac{1}{n} \sum_{i=1}^n \psi(\hat{\Delta}(X_i))\psi(\hat{\Delta}(X_i))^{\top}\right)^{-1} \times \frac{1}{N} \sum_{k=1}^K\sum_{i \in \ci_k} \psi(\hat{\Delta}^{(-k)}(X_i)) \qquad \text {and} \\
\hat{r_i} & = (D_i-\hat{\bar{\mu}}(X_i))\psi(\hat{\Delta}(X_i)), \quad i=1,\ldots,n
\end{align*}
\label{item:ci}
\end{enumerate}

The above estimation steps introduce three sources of error into the final point estimate $\hat{\bar{\tau}}$: the \emph{observational contrast estimation error} from estimating $\Delta$ in step~\ref{item:Delta_estimate}, the \emph{calibration error} from estimating $\bar{\beta}$ in the OLS regression in step~\ref{item:linear_calibration},
and the \emph{observational sampling error} from the sample average approximating the population expectation in step~\ref{item:tau_bar_hat}.
The variance estimate~\eqref{eq:V_hat} for the final confidence interval only estimates the calibration error and ignores the other two sources of error.
Intuitively, this is asymptotically plausible as $n,N \rightarrow \infty$ with $N \gg n$, i.e., the size of experimental dataset is asymptotically negligible relative to the size of the observational dataset as both grow.
This is because the calibration error scales inversely with $n$ while the other two error sources scale inversely with $N$.

Doing a more careful analysis, under regularity assumptions,
the calibration error should be on the order of $n^{-1/2}$ by standard linear regression theory.
The observational sampling error is of order $N^{-1/2}$ which is negligible compared to the calibration error as long as $N \gg n$.
Finally, the observational contrast estimation error will decay more slowly than $N^{-1/2}$ if $\Delta$ is estimated nonparametrically.
The attainable nonparametric contrast error rate generally depends on the smoothness of $\Delta$ and/or the attainable rates for estimating the nuisance functions $m_z(x) = \e_{\obs}[Y \mid X=x,Z=z], z=0,1$.
As a concrete example,
if $X$ has compact support in the observational dataset,
the observational propensity score $r(x)=\e_{\obs}[Z \mid X=x]$ lies in $[\epsilon,1-\epsilon]$ with probability 1 for some $\epsilon > 0$,
and $m_0$ and $m_1$ are c\`adl\`ag with finite sectional variation norm,
then using the highly adaptive LASSO~\citep{benkeser2016highly} to estimate $m_0(\cdot)$ and $m_1(\cdot)$ separately and then taking their difference to estimate $\Delta(\cdot)$ should achieve a rate no slower than $N^{-1/3}$ under some additional mild regularity conditions.
Faster rates could be achieved with various smoothness assumptions on $\Delta(\cdot)$~\citep{kennedy2023towards}.
In any case,
for first-order asymptotic validity of our inference we require the nonparametric estimation error for $\Delta$ to decay more quickly than $n^{-1/2}$ so that the observational contrast estimation error is negligible compared to the calibration error.

Our proposal to use cross-fitting follows the literature on double/debiased machine learning~\citep{chernozhukov2018double} to avoid needing to impose Donsker conditions on $\Delta(\cdot)$ and the corresponding estimates $\hat{\Delta}(\cdot)$.
Letting $\|f\|_{q,\rct} = \left(\int_{\cx} |f|^q(x) f_{\rct}(x)d\lambda(x)\right)^{1/q}$ denote the $L^q$ norm of any function $f$ on $\cx$ with $\e_{\rct}[|f(X)|^q]<\infty$
(and analogously defining $\|f\|_{q,\obs}$),
Theorem~\ref{thm:estimation_and_inference} below formalizes all the necessary conditions for our estimator $\hat{\bar{\tau}}$ to be $n^{1/2}$-consistent for $\bar{\tau}$ and asymptotically normal,
enabling asymptotically valid inference with our proposed confidence intervals.

\begin{theorem}
\label{thm:estimation_and_inference}
Suppose Assumption~\ref{assump:dominance} holds, $\e[|D|^2]< \infty$, $\e_{\rct}[\psi(\Delta(X))\psi(\Delta(X))^{\top}]$ is invertible, and
$n,N \rightarrow \infty$ with $n/N = o(1)$.
Additionally, suppose that the cross-fit estimates $\hat{\Delta}^{(-1)}(\cdot),\ldots,\hat{\Delta}^{(-k)}(\cdot)$ of $\Delta(\cdot)$ satisfy the rate conditions 
\begin{equation}
\label{eq:psi_rate}
    |\psi_j \circ \hat{\Delta}-\psi_j \circ \Delta\|_{2,\rct} + \|\psi_j \circ \hat{\Delta}^{(-k)}- \psi_j \circ \Delta\|_{2,\obs} = o_p(n^{-1/2}), \qquad j=1,\ldots,p, \quad k=1,\ldots,K.
\end{equation}
Then
\[
\sqrt{n}(\hat{\bar{\tau}}-\bar{\tau}) \stackrel{d}{\rightarrow} \mathcal{N}(0,\Sigma)
\]
where for $\bar{\alpha} = \left(\e_{\rct}[\psi(\Delta(X))\psi(\Delta(X))^{\top}]\right)^{-1} \e_{\obs}[\psi(\Delta(X))]$ we define
\begin{equation}
\label{eq:Sigma}
\Sigma = \bar{\alpha}^{\top} \e_{\rct}\left[(D-\bar{\mu}(X))^2\psi(\Delta(X))\psi(\Delta(X))^{\top}\right] \bar{\alpha}.
 \nonumber
\end{equation}
If additionally $\|\psi_j \circ \hat{\Delta}-\psi_j \circ \Delta\|_{2+\epsilon,\rct} = o_p(1)$ for some $\epsilon>0$ and all $j=1,\ldots,p$ with $\e_{\rct}[|D|^4]<\infty$, $\e_{\rct}[\|\psi(\Delta(X))\|^k]<\infty$ for all $k < \infty$ and $\psi(\hat{\Delta}(X))$ having finite moments of all orders in the RCT with probability tending to 1, meaning that
\begin{equation}
\label{eq:all_orders}
\liminf \Pr\left(\int \left\|\psi(\hat{\Delta}(x))\right\|^k f_{\rct}(x) d\lambda(x) < \infty \text{ for all } k < \infty \right) = 1,
\end{equation}
then the confidence interval
covers asymptotically:
\[
\liminf \Pr\left(\hat{\bar{\tau}} - z_{1-\alpha/2}\sqrt{\hat{V}_{n,N}} \leq \tau \leq \hat{\bar{\tau}} + z_{1-\alpha/2}\sqrt{\hat{V}_{n,N}}\right) \geq 1-\alpha, \quad \forall \alpha \in (0, 1).
\]
\end{theorem}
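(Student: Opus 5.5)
The plan is to decompose $\hat{\bar{\tau}}-\bar{\tau}$ into the three error sources named in Section~\ref{sec:estimation} and show that only the calibration error survives at the $n^{-1/2}$ scale. Write $\psi_i=\psi(\Delta(X_i))$, $\hat\psi_i=\psi(\hat\Delta(X_i))$ for $i\leq n$, and $\hat\psi_i^{(k)}=\psi(\hat\Delta^{(-k)}(X_i))$ for $i\in\ci_k$. Then
\[
\hat{\bar{\tau}}-\bar{\tau} = \hat{\bar\beta}^{\top}\Big(\frac1N\sum_{k}\sum_{i\in\ci_k}\hat\psi_i^{(k)}\Big)-\bar\beta^{\top}\e_{\obs}[\psi(\Delta(X))],
\]
and I will add and subtract $\bar\beta^{\top}$ times the empirical average and $\hat{\bar\beta}^{\top}\e_{\obs}[\psi(\Delta(X))]$. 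This yields three pieces:
\begin{itemize}
\item[(a)] $(\hat{\bar\beta}-\bar\beta)^{\top}\e_{\obs}[\psi(\Delta(X))]$;
\item[(b)] $(\hat{\bar\beta}-\bar\beta)^{\top}\big(N^{-1}\sum\hat\psi^{(k)}_i-\e_{\obs}[\psi(\Delta(X))]\big)$;
\item[(c)] $\bar\beta^{\top}\big(N^{-1}\sum_k\sum_{i\in\ci_k}\hat\psi^{(k)}_i-\e_{\obs}[\psi(\Delta(X))]\big)$.
\end{itemize}

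\textbf{Term (c).} I split (c) into $N^{-1}\sum(\hat\psi_i^{(k)}-\psi_i)$ and $N^{-1}\sum\psi_i-\e_{\obs}\psi$. The second part is $O_p(N^{-1/2})=o_p(n^{-1/2})$ since $n/N\to0$; for this I need a finite second moment of $\psi(\Delta(X))$ under $\obs$, which I will add or justify. For the first part, cross-fitting is the key: conditional on the out-of-fold data, the summands in fold $k$ are i.i.d. By Cauchy--Schwarz, their conditional mean is bounded by $\|\psi_j\circ\hat\Delta^{(-k)}-\psi_j\circ\Delta\|_{2,\obs}=o_p(n^{-1/2})$. Their conditional variance is bounded by the same squared norm divided by $|\ci_k|$. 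A conditional Chebyshev argument, as in \citet{chernozhukov2018double}, then gives $o_p(n^{-1/2})$.

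\textbf{Calibration error.} Here $\hat{\bar\beta}-\bar\beta=\hat G^{-1}n^{-1}\sum_i\hat\psi_i(D_i-\hat\psi_i^{\top}\bar\beta)$ with $\hat G=n^{-1}\sum\hat\psi_i\hat\psi_i^{\top}$. I will show three things:
\begin{itemize}
\item[(i)] $\hat G\to_p G=\e_{\rct}[\psi\psi^{\top}]$. Since $\hat\Delta$ is built from the independent observational data, it is fixed conditionally, so $\hat G-n^{-1}\sum\psi_i\psi_i^{\top}$ is controlled via Cauchy--Schwarz and the $L^2(\rct)$ rate, and the LLN handles the rest.
\item[(ii)] $n^{-1/2}\sum_i[\hat\psi_i(D_i-\hat\psi_i^{\top}\bar\beta)-\psi_i(D_i-\psi_i^{\top}\bar\beta)]=o_p(1)$.
\item[(iii)] $n^{-1/2}\sum\psi_i(D_i-\bar\mu(X_i))\to\mathcal N(0,\e_{\rct}[(D-\bar\mu)^2\psi\psi^{\top}])$ by the CLT. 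This uses the first-order condition~\eqref{eq:mu_bar_foc}, which gives mean zero, together with $\e|D|^2<\infty$ (finite fourth moments of $\psi(\Delta(X))$ seem to be needed as well).
\end{itemize}

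\textbf{Main obstacle.} I expect step (ii) to be hardest. Expanding it gives terms such as $n^{-1/2}\sum(\hat\psi_i-\psi_i)(D_i-\bar\mu(X_i))$, which is conditionally mean zero given $\hat\Delta$ after centering. Its variance is bounded by $\e[(D-\bar\mu)^2\|\hat\psi-\psi\|^2]$, which I would control by Hölder using the rate (possibly needing bounded conditional variance of $D$). The remaining terms are quadratic, such as $n^{-1/2}\sum\hat\psi_i(\hat\psi_i-\psi_i)^{\top}\bar\beta$, and these have nonzero mean $\sqrt n\,\e_{\rct}[\hat\psi(\hat\psi-\psi)^{\top}]\bar\beta$. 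Bounding that mean by $\sqrt n\,\|\hat\psi\|_{2,\rct}\|\hat\psi-\psi\|_{2,\rct}=o_p(1)$ is exactly where the $o_p(n^{-1/2})$ rate (rather than a product rate) is required. With (i)--(iii), $\sqrt n(\hat{\bar\beta}-\bar\beta)=O_p(1)$, so (b) is $o_p(n^{-1/2})$ by the same argument as for (c). Then (a) yields, by Slutsky, the limit $\mathcal N(0,\bar\alpha^{\top}\e_{\rct}[(D-\bar\mu)^2\psi\psi^{\top}]\bar\alpha)=\mathcal N(0,\Sigma)$.

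\textbf{Coverage.} For the coverage claim I show $n\hat V_{n,N}\to_p\Sigma$. The second summand of $\hat V_{n,N}$ is $O_p(1/N)$, so after multiplying by $n$ it is $o_p(1)$. The vector $\hat a_{n,N}\to_p\bar\alpha$ by (i) and the analysis of (c). For the middle matrix, $n^{-1}\sum\hat r_i\hat r_i^{\top}\to_p\e[(D-\bar\mu)^2\psi\psi^{\top}]$: I replace $\hat r_i$ by $r_i$ and bound the fourth-order cross terms via Hölder. Here the $L^{2+\epsilon}$ consistency, the moment condition $\e|D|^4<\infty$, and the all-moments condition~\eqref{eq:all_orders} supply the conjugate exponents, applied conditionally on $\hat\Delta$ together with a conditional weak LLN. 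Slutsky then gives asymptotic coverage for $\bar\tau$. The displayed statement writes $\tau$; I would read this as $\bar\tau$, or as $\tau$ when $\bar\tau=\tau$.
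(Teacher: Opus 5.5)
Your decomposition and overall route match the paper's. You isolate $\sqrt n(\hat{\bar\beta}-\bar\beta)^{\top}\e_{\obs}[\psi(\Delta(X))]$ and remove the cross-fit contrast error and the $O_p(N^{-1/2})$ sampling error using the rate and $n/N\to0$. You reduce $\sqrt n(\hat{\bar\beta}-\bar\beta)$ to the oracle OLS term built from $\psi(\Delta(X_i))$ in place of $\psi(\hat\Delta(X_i))$. For coverage you show $n\hat V_{n,N}\to_p\Sigma$ with the same H\"older/Minkowski bookkeeping. Your conditional Chebyshev argument for the cross-fit term is interchangeable with the paper's Jensen-plus-conditional-Markov step (Lemma 4 of \citet{li2024efficient}). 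You are also right that the coverage conclusion should be about $\bar\tau$, not $\tau$.

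The step you single out as the main obstacle is where your plan goes wrong, though the fix is easy. The term $n^{-1/2}\sum_i(\hat\psi_i-\psi_i)(D_i-\bar\mu(X_i))$ is not conditionally mean zero given $\hat\Delta$. Its conditional mean is $\sqrt n\,\e_{\rct}[(\psi(\hat\Delta(X))-\psi(\Delta(X)))(\mu(X)-\bar\mu(X))]$. Condition \eqref{eq:mu_bar_foc} makes $\mu-\bar\mu$ orthogonal only to $\psi(\Delta(X))$, not to $\psi(\hat\Delta(X))$, so this mean is nonzero exactly in the misspecified case the theorem is built for. Separately, the variance bound $\e[(D-\bar\mu(X))^2\|\hat\psi-\psi\|^2]$ cannot be controlled by H\"older from $\e|D|^2<\infty$ and $L^2$ rates alone. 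That is why you end up needing bounded conditional variance of $D$, which the theorem does not assume.

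None of this is needed. Because the rate in \eqref{eq:psi_rate} is $o_p(n^{-1/2})$, Cauchy--Schwarz on the sample gives
\[
\Bigl|\frac{1}{\sqrt n}\sum_{i=1}^n (D_i-\bar\mu(X_i))(\hat\psi_{i,j}-\psi_{i,j})\Bigr| \leq \sqrt n\Bigl(\frac1n\sum_{i=1}^n (D_i-\bar\mu(X_i))^2\Bigr)^{1/2}\Bigl(\frac1n\sum_{i=1}^n (\hat\psi_{i,j}-\psi_{i,j})^2\Bigr)^{1/2} = \sqrt n\cdot O_p(1)\cdot o_p(n^{-1/2})=o_p(1).
\]
The last factor is handled by conditional Markov given $\hat\Delta$. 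This is exactly what the paper does, and it is the device you already use for the quadratic terms, so the mean/fluctuation split is unnecessary throughout (ii).

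On (iii): the condition really needed is finiteness of $\e_{\rct}[(D-\bar\mu(X))^2\psi(\Delta(X))\psi(\Delta(X))^{\top}]$, which the paper tacitly assumes when it defines $\Sigma$. Fourth moments of $\psi$ together with $\e|D|^2<\infty$ would not by themselves guarantee it.
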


In the crop rotation data, 
we have $n < N^{1/2}$.
In this case, the rate condition $\|\hat{\Delta}-\Delta\|_{2,\obs} = o_p(n^{-1/2})$ is satisfied as long as $\Delta$ can be estimated at a root mean square rate faster than $o_p(N^{-1/4})$.
This is the standard double machine learning rate,
and immediately implies~\eqref{eq:psi_rate} if $\psi(z)=(1,z)^{\top}$.

\section{Semiparametric efficiency}
\label{sec:discussion}
Much of the related work on causal transportation has worked under a ``nested" trial framework where the experimental and observational datasets are viewed as i.i.d. samples from a single combined population,
instead of as independent samples from two different populations.
This enables the application of semiparametric efficiency theory to evaluate the optimality of proposed estimators.
To study the efficiency properties of our proposed estimator $\hat{\bar{\tau}}$,
we now re-formulate our setting in the nested trial framework.
We refer the reader to~\citet{colnet2024causal} for a more comprehensive treatment of the distinction between the nested and non-nested perspectives.

Consider the single ``fused" dataset $\{W_i=(Q_i,X_i,Y_i)\}_{i=1}^M$ of size $M = n+N$
that concatenates the experimental and observational datasets.
Here, $D_i$ in the experimental dataset is renamed to $Y_i$ (cf.~\citet{li2024efficient})
and $Q_i=2$ for all observations $i=1,\ldots,n$ originally in the experimental dataset
while $Q_i \equiv Z_i \in \{0,1\}$ for all observations $i=n+1,\ldots,n+N$ originally in the observational dataset.
Then, for example, the conditional distribution of $Y$ given $Q=2$ under the nested framework can be viewed as equivalent to the marginal distribution of $D$ in the experiment under our non-nested framework from the previous sections.
More generally,
the distribution of the experimental (resp. observational) dataset
is reinterpreted as a conditional distribution given $Q=2$ (resp. $Q \neq 2$).
For instance,
\begin{equation}
\label{eq:Delta_nested}
\Delta(x) = m_1(x)-m_0(x); \qquad m_q(x) = \e[Y \mid Q=q], \quad q=0,1,2,
\end{equation}
while the asymptotic variance $\Sigma$ from~\eqref{eq:Sigma} can be reinterpreted as
\begin{equation}
\label{eq:Sigma_nested}
\Sigma = \bar{\alpha}^{\top} \e\left[(Y-\bar{\mu}(X))^2\psi(\Delta(X))\psi(\Delta(X))^{\top} \mid Q=2\right] \bar{\alpha}
\end{equation}
for
\begin{equation}
\label{eq:alpha_nested}
\bar{\alpha} = \left(\e[\psi(\Delta(X))\psi(\Delta(X))^{\top} \mid Q = 2]\right)^{-1}\e[\psi(\Delta(X)) \mid Q \neq 2],
\end{equation}
and we write the likelihood ratio from the proof of Proposition~\ref{prop:wate} as
\begin{equation}
\label{eq:Lambda}
\Lambda(x) = \frac{f_{X \mid Q=2}(x)}{f_{X \mid Q \neq 2}(x)}
\end{equation}
where $f_{X \mid Q=2}(x) \equiv f_{\rct}(x)$ and $f_{X \mid Q \neq 2}(x) \equiv f_{\obs}(x)$ are conditional densities of $X$ given $Q=2$ (resp. $Q \neq 2$). Additionally, we have the observational propensity score
\begin{equation}
\label{eq:r_nested}
r(x) = \Pr(Q=1 \mid X=x,Q \neq 2).
\end{equation}

We now derive and state the semiparametric efficiency bound for our projected estimand $\bar{\tau}$
under the nested formulation of our setup with the single fused dataset $\{(Q_i,X_i,Y_i)\}_{i=1}^M$ and $M \rightarrow \infty$.
The estimand is
\[
\bar{\tau} = \e[\bar{\mu}(X) \mid Q \neq 2], \quad \bar{\mu} = \argmin_{f \in \cf} \e[(Y-f(X))^2 \mid Q=2] = \argmin_{f \in \cf} \e[\indic(Q=2)(Y-f(X))^2].
\]
Assuming that 
\begin{equation}
\label{eq:nested_linear_independence}
\Pr(c^{\top}\psi(\Delta(X))=0 \mid Q=2) = 0, \quad \forall c \in \real^p \setminus \{0\}
\end{equation}
(the nested analogue of the linear independence condition in Proposition~\ref{prop:wate}), we can write
\begin{equation}
\label{eq:tau_bar_nested}
\bar{\mu}(x) = \psi(\Delta(x))^{\top} \bar{\beta}, \quad 
\bar{\tau} = \e[\psi(\Delta(X)) \mid Q \neq 2]^{\top}\bar{\beta}
\end{equation}
where $\bar{\beta} = \left(\e[\psi(\Delta(X))\psi(\Delta(X))^{\top} \mid Q=2]\right)^{-1}\e\left[Y\psi(\Delta(X)) \mid Q=2\right]$ is the nested analogue of~\eqref{eq:mu_bar_explicit}.
Then, we let
\begin{equation}
\label{eq:rho_q_nested}
\rho_q = \Pr(Q = q), \quad q=0,1,2,
\end{equation}
and follow~\citet{zhang2023double} and~\citet{kallus2025role}
in examining the behavior of our efficiency bound as $\rho_2  \rightarrow 0$ as $M \rightarrow \infty$ but the conditional distribution of $W=(Q,X,Y)$ given $\indic(Q=2)$ remains constant.
In words, this is the asymptotic regime where the distributions within each dataset remain constant while the ratio of the experimental sample size to the observational sample size vanishes.
This accounts for our nonstandard setting with the experimental dataset being small compared to the observational dataset.
Our main result is that in this regime,
the semiparametric efficiency bound --- assuming it is finite --- approaches the asymptotic variance $\Sigma$ of our proposed outcome regression estimator $\hat{\bar{\tau}}$.
Thus,
when the conditions of Theorem~\ref{thm:estimation_and_inference} are satisfied,
our estimator $\hat{\bar{\tau}}$ is efficient for $\bar{\tau}$.

\begin{theorem}
\label{thm:efficiency}
Suppose $\{W_i=(Q_i,X_i,Y_i)\}_{i=1}^M$ are i.i.d. with the support of $Q_i$ being $\{0,1,2\}$, equation~\eqref{eq:nested_linear_independence} holds, and $\psi(\cdot)$ is continuously differentiable on $\cx$ with gradient $\dot{\psi}(\cdot)$.
Then the efficient influence function (as $M \rightarrow \infty$) for the estimand $\bar{\tau}$ in~\eqref{eq:tau_bar_nested} is
\[
\psi(w;\bar{\tau},\eta) = \frac{\indic(q \neq 2)}{1-\rho_2}\left(\bar{\mu}(x)-\bar{\tau}+\iota(w;\eta)\kappa(x;\eta)\right) + \frac{\indic(q=2)}{\rho_2}(y-\bar{\mu}(x))\bar{\alpha}^{\top}\psi(\Delta(x))
\]
where $\eta$ consists of the nuisance functions and parameters defined in equations~\eqref{eq:Delta_nested} --~\eqref{eq:rho_q_nested} and $w=(q,x,y)$ with
\begin{align*}
\iota(w;\eta) & = \frac{\indic(q=1)}{r(x)}(y-m_1(x)) - \frac{\indic(q=0)}{1-r(x)}(y-m_0(x)), \quad \text{ and } \\
\kappa(x;\eta) &= \dot{\psi}(\Delta(x))^{\top}\bar{\beta} + \Lambda(x)\bar{\alpha}^{\top}\left(m_2(x)\dot{\psi}(\Delta(x))-\left[\psi(\Delta(x))\dot{\psi}(\Delta(x))^{\top}+\dot{\psi}(\Delta(x))\psi(\Delta(x))^{\top}\right]\right)\bar{\beta}.
\end{align*}
Hence, the semiparametric efficiency bound is $V_{\eff}=\e[\psi^2(W;\bar{\tau},\eta)]$.
If $V_{\eff}<\infty$,
then as $M \rightarrow \infty$ with $\rho_2 \rightarrow 0$ while the conditional distribution of $(Q,X,Y)$ given $\indic(Q = 2)$ remains constant,
we have $\rho_2 V_{\eff} \rightarrow \Sigma$ for $\Sigma$ as in~\eqref{eq:Sigma_nested}.
\end{theorem}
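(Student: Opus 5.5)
We will derive the efficient influence function by a pathwise-derivative (Gateaux) calculation in the nonparametric model for $W=(Q,X,Y)$, and then study its second moment as $\rho_2\to 0$. Since the model places no restrictions on the joint law of $W$ beyond \eqref{eq:nested_linear_independence}, the tangent space is saturated. Hence the influence function of any regular asymptotically linear estimator is unique, and it suffices to compute the canonical gradient of $\bar{\tau}$ along a generic smooth parametric submodel $P_t$ with score $s(w)$.

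\textbf{Decomposing the estimand.} Write
\[
\bar{\tau} = A^{\top} B^{-1} C,
\]
where $A=\e[\psi(\Delta(X))\mid Q\neq 2]$, $B=\e[\psi(\Delta(X))\psi(\Delta(X))^{\top}\mid Q=2]$ and $C=\e[Y\psi(\Delta(X))\mid Q=2]$. By the product rule, the derivative $\partial_t\bar{\tau}_t$ splits into two kinds of contributions. The first kind comes from perturbing the conditional laws inside $A$, $B$, $C$ with $\Delta$ held fixed:
\begin{itemize}
\item the $A$-term gives $\frac{\indic(q\neq2)}{1-\rho_2}\bigl(\psi(\Delta(x))^{\top}\bar{\beta}-\bar{\tau}\bigr)$;
\item the $B$- and $C$-terms combine, via $C-B\bar{\beta}$, into $\frac{\indic(q=2)}{\rho_2}\bar{\alpha}^{\top}\psi(\Delta(x))(y-\bar{\mu}(x))$.
\end{itemize}
These are the familiar gradients of a conditional mean and of an OLS coefficient. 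We will use the standard device that the derivative of $\e[g(W)\mid Q\in S]$ has gradient $\indic(q\in S)(g-\e[g\mid S])/\Pr(S)$. The centering terms for $B$ and $C$ cancel because $C=B\bar{\beta}$.

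The second kind comes from perturbing $\Delta=m_1-m_0$. The influence function of $m_1(x)-m_0(x)$, integrated against a function $h(x)$ under $f_{\obs}$, is
\[
\frac{\indic(q\neq2)}{1-\rho_2}\,h(x)\,\iota(w;\eta).
\]
This is the usual AIPW-type gradient, using $r(x)=\Pr(Q=1\mid X=x,Q\neq2)$. We therefore chain-rule through $\psi$, using $\dot\psi$, and obtain $h$ from three sources:
\begin{itemize}
\item differentiating $A$ gives $h=\dot\psi(\Delta)^{\top}\bar{\beta}$;
\item differentiating $B$ and $C$ through $\Delta$ gives expectations under $Q=2$, which we convert to $Q\neq2$ expectations by multiplying by $\Lambda(x)$. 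This is where Assumption~\ref{assump:dominance} is needed.
\end{itemize}
Collecting the $C$-piece, $\bar{\alpha}^{\top}m_2(x)\dot\psi(\Delta(x))$ after conditioning $Y$ on $X$ in the $Q=2$ stratum, together with the $B$-piece, $-\bar{\alpha}^{\top}[\psi\dot\psi^{\top}+\dot\psi\psi^{\top}]\bar{\beta}$, yields exactly $\kappa(x;\eta)$. Summing all contributions gives the displayed $\psi(w;\bar{\tau},\eta)$, and $V_{\eff}=\e[\psi^2]$ follows.

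\textbf{The limit $\rho_2\to0$.} The two summands of $\psi(w;\bar{\tau},\eta)$ have disjoint supports, so
\begin{align*}
V_{\eff} &= \frac{1}{1-\rho_2}\,\e\bigl[(\bar{\mu}-\bar{\tau}+\iota\kappa)^2\mid Q\neq2\bigr] \\
&\quad + \frac{1}{\rho_2}\,\e\bigl[(Y-\bar{\mu}(X))^2(\bar{\alpha}^{\top}\psi(\Delta(X)))^2\mid Q=2\bigr].
\end{align*}
The conditional laws are held fixed as $\rho_2$ varies, so both conditional expectations are constants. The second term times $\rho_2$ equals $\Sigma$ in \eqref{eq:Sigma_nested}. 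The first term times $\rho_2$ is $O(\rho_2)\to0$, provided it is finite; this is exactly what the hypothesis $V_{\eff}<\infty$ guarantees.

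\textbf{Main obstacle.} The main difficulty is the bookkeeping of the $\Delta$-perturbation terms: correctly chain-ruling through $\psi\circ\Delta$ inside both $B$ (a quadratic form) and $C$, and transferring the $Q=2$ expectations to the observational stratum via $\Lambda$, so that the coefficient multiplying $\iota$ matches $\kappa$. I would verify this by an explicit submodel in which only $m_1$ is perturbed.
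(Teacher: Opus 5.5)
Your proposal is correct and follows essentially the same route as the paper. Both arguments compute the pathwise derivative in the nonparametric model by applying the product rule to $\bar{\tau}=A^{\top}B^{-1}C$ and separating perturbations of the conditional laws from perturbations of $\Delta$. Your three $h$-sources match the paper's $A_2$, $B_2$, $B_4$ terms, your distributional pieces match $A_1+A_3$ and $B_1-B_3$, and $\Lambda$ is used in the same way to move $Q=2$ expectations to the $Q\neq 2$ stratum. The $\rho_2 V_{\eff}\to\Sigma$ step is the same disjoint-support argument. The only difference is presentational: you invoke standard gradients for conditional means, OLS coefficients and the AIPW-type contrast, whereas the paper verifies $\e[\psi s]$ term by term against the factored score of $f(q)g(x\mid q)h(y\mid x,q)$.
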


It is instructive to contrast the efficiency bound for $\bar{\tau}$,
derived in Theorem~\ref{thm:efficiency},
with the efficiency bound for the transported average treatment effect $\tau$,
given in~\citet{li2023note} and other references.
An important quantity in the efficiency bound for $\tau$ is the experimental sampling propensity score $\pi(x)=\Pr(Q=2 \mid X=x)$.
By Bayes' rule, we write
\begin{equation}
\label{eq:pi}
\pi(x) = \frac{f_{X \mid Q=2}(x)\Pr(Q=2)}{f_{X \mid Q = 2}(x)\Pr(Q=2)+f_{X \mid Q \neq 2}(x)\Pr(Q \neq 2)} = \frac{\rho_2 \Lambda(x)}{\rho_2 \Lambda(x)+(1-\rho_2)}.
\end{equation}
The efficiency bound for $\tau$ is infinite if $\e[\pi(X)^{-1} \mid Q \neq 2]=\infty$.
By~\eqref{eq:pi},
this means we can only hope to have a finite efficiency bound for $\tau$ if
$\e[\Lambda(X)^{-1} \mid Q \neq 2]<\infty$.
This requires a strong form of positivity (Assumption~\ref{assump:positivity}).
By contrast,
the only term depending on $\Lambda(\cdot)$ in the efficient influence function $\psi$ for $\bar{\tau}$ derived in Theorem~\ref{thm:efficiency}
is
\[
\frac{\indic(q \neq 2)}{(1-\rho_2)}\iota(w;\eta)\kappa(x;\eta)
\]
which has finite squared expectation if we assume $\psi(\Delta(X))$, $\dot{\psi}(\Delta(X))$, and $Y$ have finite moments of all orders in the observational dataset and $\Lambda(X)$ (not $\Lambda(X)^{-1}$) has a finite moment of order strictly greater than 2.

We note that prior works~\citep{newey1994asymptotic, chernozhukov2021automatic, hirshberg2021augmented} have studied semiparametric efficient estimation for projection estimands of the form
\begin{equation}
\label{eq:projection_estimand}
\bar{\theta} = \e[m(W,\bar{\gamma})], \quad \bar{\gamma}(\cdot) = \argmin_{\gamma \in \Gamma} \e[\ell(W,\gamma)]
\end{equation}
in a general supervised learning setting with i.i.d. observations $\{W_i=(Y_i,R_i)\}_{i=1}^M$,
where $\ell(\cdot,\cdot)$ is some loss function and $\Gamma$ is any function collection that is linear and closed with respect to the Hilbert space $\ch_R$ of all square integrable functions of $R$.
In our setting,
taking $R=(Q,X)$,
the estimand $\bar{\theta} = (1-\rho_2)\bar{\tau}$ takes the form~\eqref{eq:projection_estimand}
with $\bar{\gamma}=\bar{\mu}$, 
$\ell(W,\bar{\gamma}) = \indic(Q=2)(Y-\bar{\gamma}(X))^2$,
$\Gamma=\cf$,
and $m(W,\bar{\gamma})= \indic(Q \neq 2)\bar{\gamma}(X)$.
Then the cited works give the efficiency bound for $\bar{\theta}$,
which can be easily converted into an efficiency bound for $\bar{\tau}=\bar{\theta}/(1-\rho_2)$
(Appendix~\ref{app:eff_bound}).
However, these efficiency bounds are based on the assumption that the function class $\Gamma$ is known.
In our setting,
this is equivalent to $\Delta(\cdot)$ being known.
It turns out that the efficiency bound for $\bar{\tau}$ when $\Delta(\cdot)$ is known
is generally smaller than the bound $V_{\eff}$ in Theorem~\ref{thm:efficiency}.
A similar result was shown recently by~\citet{wang2025ratedoublyrobustestimation} for estimating general weighted average treatment effects with weights depending on the propensity score.
However, the difference between the efficiency bound for $\bar{\tau}$ when $\Delta(\cdot)$ is known and the bound when $\Delta(\cdot)$ must be estimated tends to zero as $\rho_2 \rightarrow 0$,
analogous to what~\citet{kallus2025role} found in the context of semi-supervised learning with surrogate outcomes.
This formalizes our intuition that when the observational dataset is much larger than the experimental dataset,
we can ignore the observational contrast estimation error.
Nonetheless, our novel efficiency bound $V_{\eff}$ and corresponding efficient influence function from Theorem~\ref{thm:efficiency} are useful for constructing an efficient one-step or targeted maximum likelihood estimator for $\bar{\tau}$
when $\rho_2 \not\rightarrow 0$ and $\Delta(\cdot)$ is not known.
We do not work through the details here,
focusing instead on our simpler outcome regression estimator $\hat{\bar{\tau}}$ which is asymptotically equivalent (under the conditions of Theorem~\ref{thm:estimation_and_inference}) in the regime of interest with $\rho_2 \rightarrow 0$.

\section{Numerical studies}
\label{sec:simulations}

Through numerical studies on two sets of simulated data-generating processes and one real dataset combining field experiments and satellite-based crop yield estimates to estimate the impact of crop rotation on maize yields,
we illustrate the numerical performance of our proposed estimator $\hat{\bar{\tau}}$ (and our corresponding confidence intervals) relative to two alternatives in the causal transportation literature.
The alternative estimators are:
\begin{enumerate}
\item The augmented inverse propensity of sampling weighting (AIPSW) estimator, given by
\begin{equation}
\label{eq:aipsw}
\hat{\tau}_{\aipsw} = \frac{1}{n} \sum_{i=1}^n \frac{n}{N} (D_i-\hat{\mu}(X_i)) \hat{q}_{n,N}(X_i) + \frac{1}{N} \sum_{i=n+1}^{n+N} \hat{\mu}(X_i)
\end{equation}
where $\hat{q}_{n,N}(x)$ estimates the odds $q_{n,N}(x)$ that a randomly selected observation with $X=x$ in the combined observational and experimental dataset is in the observational dataset:
\[
q_{n,N}(x) = \frac{N}{n} \frac{f_{\obs}(x)}{f_{\rct}(x)}.
\]
See~\citet{colnet2024causal} for an extensive discussion of the AIPSW estimator.
\item The collaborative estimator of~\citet{rudolph_improving_2025}, given by
\begin{align}
\hat{\tau}_{\collab} & = \frac{1}{n}\sum_{i=1}^n \frac{n}{N}\left[(D_i-\hat{\mu}(X_i))\hat{g}_{n,N}(X_i) + \frac{\hat{q}_{n,N}(X_i)}{1+\hat{q}_{n,N}(X_i)}(\hat{\mu}(X_i)-\hat{k}_{n,N}(X_i))\right] \nonumber \\
& \quad + \frac{1}{N} \sum_{i=n+1}^{n+N} \frac{\hat{q}_{n,N}(X_i)\hat{\mu}(X_i) + \hat{k}_{n,N}(X_i)}{1+\hat{q}_{n,N}(X_i)} \label{eq:collab}
\end{align}
where $\hat{g}_{n,N}$ and $\hat{k}_{n,N}$ estimate the functions
\[
g_{n,N}(x) = \frac{N}{n} \frac{f_{\obs}^{\mu}(\mu(x))}{f_{\rct}^{\mu}(\mu(x))} \quad \text{and} \quad k_{n,N}(x) = \e[\mu(X) \mid g_{n,N}(X)=g_{n,N}(x)],
\]
respectively,
where $f^{\mu}_{\rct}$ and $f^{\mu}_{\obs}$ are the densities of $\mu(X)$ in the experimental and observational datasets, respectively.
For intuition, note that $g_{n,N}(x)$ corresponds to the odds that a randomly selected observation with CATE $\mu(X)=\mu(x)$ in the combined observational and experimental dataset is in the observational dataset.
\end{enumerate}

The AIPSW estimator is well known to be regular and semiparametric efficient for estimating the transported average treatment effect $\tau$ under positivity.
The collaborative estimator is designed to improve upon the precision of AIPSW when the covariates are multivariate by replacing the scaled likelihood ratio $q_{n,N}$ with the univariate sampling odds propensity $g_{n,N}$,
and projecting out some of the residual variance in $\mu$ onto $k$.
However, 
the collaborative estimator is no longer regular
because $g_{n,N}$ and $k_{n,N}$ are conditional expectations given nuisance functions that must be estimated.
Asymptotic 95\% confidence intervals for $\tau$ based on the estimators $\hat{\tau}_{\aipsw}$ and $\hat{\tau}_{\collab}$ (valid under sufficiently strong positivity conditions) are given by
\[
\hat{\tau}_{\aipsw} \pm z_{0.975} \sqrt{\hat{V}_{\aipsw}}, \qquad \hat{\tau}_{\collab} \pm z_{0.975} \sqrt{\hat{V}_{\collab}}
\]
where the variance estimates $\hat{V}_{\aipsw}$ and $\hat{V}_{\collab}$ are computed using the estimated influence functions:
\begin{align*}
\hat{V}_{\aipsw} & = \frac{1}{N^2} \left(\sum_{i=1}^{n} \hat{q}_{n,N}^2(X_i)(D_i-\hat{\mu}(X_i))^2 + \sum_{i=n+1}^{n+N} (\hat{\mu}(X_i)-\hat{\tau}_{\aipsw})^2\right) \\
\hat{V}_{\collab} & = \frac{1}{N^2}\sum_{i=1}^n \left[\hat{g}_{n,N}(X_i)(D_i-\hat{\mu}(X_i))+ \frac{\hat{q}_{n,N}(X_i)}{1+\hat{q}_{n,N}(X_i)}(\hat{\mu}(X_i)-\hat{k}_{n,N}(X_i))\right]^2 \\
& + \frac{1}{N^2} \sum_{i=n+1}^{n+N} \left(\frac{\hat{q}_{n,N}(X_i)\hat{\mu}(X_i) + \hat{k}_{n,N}(X_i)}{1+\hat{q}_{n,N}(X_i)}-\hat{\tau}_{\collab}\right)^2
\end{align*}

Throughout our numerical studies,
we implement our proposed outcome regression estimator $\hat{\bar{\tau}}$ using the simple basis expansion $\psi(\Delta(x))=(1,\Delta(x))^{\top}$.
Appendix~\ref{app:nuisance_estimation} provides details on how all nuisance functions are estimated for all of the numerical studies.
For simplicity, throughout this section we use the notation of our original non-nested setup introduced in Section~\ref{sec:estimand},
rather than the notation of the nested trial framework introduced in Section~\ref{sec:discussion}.
Code and data to fully reproduce all results are publicly available at \url{https://github.com/hli90722/calibrating_large_scale_observations}.

\subsection{Univariate covariate simulation}
\label{sec:1d_sim}
We begin by examining a data generating process based on the simulation study of~\citet{kallus2018removing} with a single univariate covariate $X$.
That study considers a clear positivity violation: in the experimental dataset,
$X$ is uniformly distributed on the interval $(-1,1)$,
whereas in the observational dataset, $X$ follows a standard normal distribution.
In such a setting, existing causal transportation estimators like $\hat{\tau}_{\aipsw}$ and $\hat{\tau}_{\collab}$ that do not allow for any extrapolation would fail due to the inability to identify $\mu(x)$ for $x$ outside $(-1,1)$.
Thus, we instead consider a data generating process where positivity holds, but there is potentially significant covariate shift between the experimental and observational datasets.
Specifically, we let the covariate $X$ in the experimental dataset follow a normal distribution with mean $-\theta$ and variance $1-\theta$,
and vary $\theta \in \{0,0.1,0.2,0.3,0.4,0.5,0.6,0.7\}$;
evidently higher values of $\theta$ correspond to greater covariate shift.
All other aspects of the data-generating process --- including the standard normal distribution for $X$ in the observational dataset --- are unchanged from~\citet{kallus2018removing}.
Translating their setting into our notation,
we have observational contrast $\Delta(x) = 0.75x^2+3x+1$ and true CATE $\mu(x)=\Delta(x)-x$,
and the observational treatment assignments $Z_i$ are generated by fair coin flips independently of the covariates $X_i$.
We fix the experimental and observational sample sizes at $n=100$ and $N=10,000$, respectively.

\begin{figure}[!htb]
\begin{center}
\includegraphics[width=\linewidth]{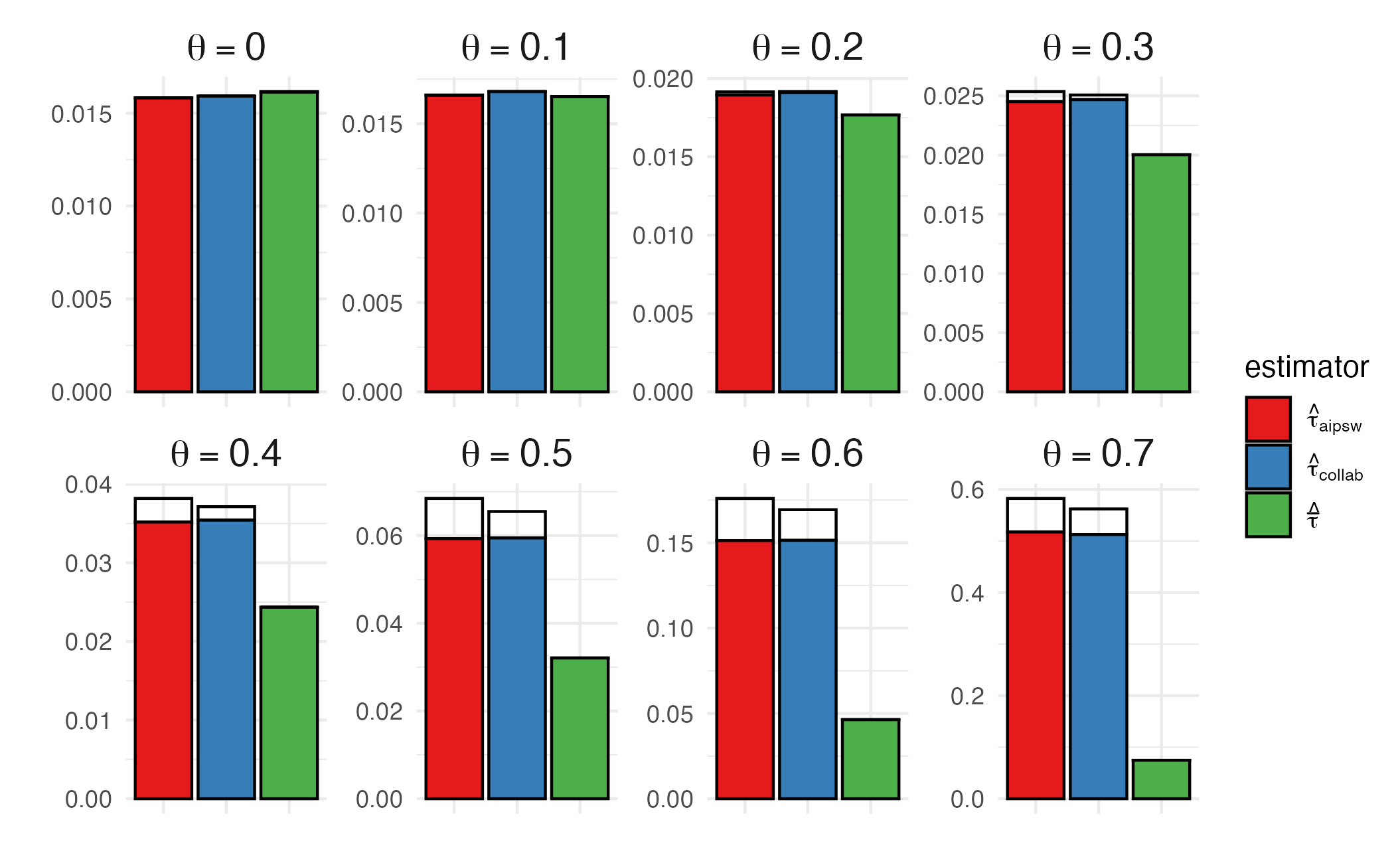}
\end{center}
\caption{The height of each bar corresponds to the estimated mean squared error (MSE) of an estimator from the simulation study of Section~\ref{sec:1d_sim},
based on 1000 simulations.
Each panel corresponds to a different value of the parameter $\theta$, which quantifies the degree of covariate shift between the experimental and observational datasets.
The estimators $\hat{\tau}_{\aipsw}$ and $\hat{\tau}_{\collab}$ are compared to the true transported average treatment effect $\tau$ while the estimator $\hat{\bar{\tau}}$ is compared to the projected treatment effect $\bar{\tau}$.
The shaded part of each bar corresponds to the estimated variance while the unshaded part of each bar corresponds to the estimated squared bias.}
\label{fig:mse_plot_1d_sim}
\end{figure}

Fig.~\ref{fig:mse_plot_1d_sim} shows that the collaborative and AIPSW estimators exhibit similar empirical mean squared errors (MSE) across all values of $\theta$.
This is unsurprising since the covariate is univariate and thus the collaborative estimator does not provide any dimensionality reduction advantages.
For the smallest values of $\theta$,
where the observational and experimental covariate distributions are similar, 
our outcome regression estimator $\hat{\bar{\tau}}$ performs similarly to the AIPSW and collaborative estimators.
As covariate shift increases,
however,
the outcome regression estimator $\hat{\bar{\tau}}$ shows increasing improvement over the AIPSW and collaborative estimators.
Indeed, for $\theta=0.7$, the MSE of $\hat{\bar{\tau}}$ (for estimating our estimand $\bar{\tau}$) is about 87\% lower than the MSE of the AIPSW and collaborative estimators (for estimating the transported ATE $\tau$).
Additionally, for the larger values of $\theta$,
we observe non-negligible bias (for the transported ATE $\tau$) in the AIPSW and collaborative estimators while the bias in our outcome regression estimator (for our projection estimand $\bar{\tau}$) remains sufficiently negligible as to be invisible in Fig.~\ref{fig:mse_plot_1d_sim}.

The quality of inference for $\tau$ using the AIPSW and collaborative estimators also degrades sharply as $\theta$ increases (Table~\ref{table:1d_ci_coverage}).
By contrast,
the empirical coverage of the asymptotic 95\% confidence intervals based on $\hat{\bar{\tau}}$ (for the estimand $\bar{\tau}$) stays in the range of about 92\%-93\% for all values of $\theta$,
whereas the coverage for the other two intervals (for the estimand $\tau$) degrades steadily.
Indeed, when $\theta \geq 0.5$,
\emph{oracle} variants of the AIPSW and collaborative estimators (that know the true nuisance functions) have infinite variance,
so the corresponding confidence intervals for $\tau$ are no longer asymptotically justified,
though it is clear from our simulations that the intervals' coverage breaks down for smaller values of $\theta$,
even in this rather benign simulation setting with a single covariate.

\begin{table}[!ht]
\centering
\caption{
The empirical coverage probabilities (left) and mean widths (right) for the nominal 95\% confidence intervals considered in the simulation study of Section 5.1,
across 1000 simulations.
}
\label{table:1d_ci_coverage}
\begin{minipage}{.45\linewidth}
\centering
\begin{tabular}{|c|c|c|c|}
\hline
$\theta$ & $\hat{\bar{\tau}}$ & $\hat{\tau}_{\aipsw}$ & $\hat{\tau}_{\collab}$ \\
\hline
0 & 92.8\% & 90.3\% & 90.2\% \\
0.1 & 92.6\% & 89.4\% & 88.9\% \\
0.2 & 92.7\% & 87.7\% & 88.3\% \\
0.3 & 92.2\% & 84.1\% & 85.1\% \\
0.4 & 92.1\% & 75.8\% & 78.1\% \\
0.5 & 91.6\% & 63.2\% & 66.9\% \\
0.6 & 91.9\% & 48.3\% & 51.6\% \\
0.7 & 92.1\% & 32.7\% & 36.3\% \\
\hline
\end{tabular}
\end{minipage}%
\hfill 
\begin{minipage}{.45\linewidth}
\centering
\begin{tabular}{|c|c|c|c|}
\hline
$\theta$ & $\hat{\bar{\tau}}$ & $\hat{\tau}_{\aipsw}$ & $\hat{\tau}_{\collab}$ \\
\hline
0 & 0.461 & 0.425 & 0.425 \\
0.1 & 0.462 & 0.424 & 0.425 \\
0.2 & 0.474 & 0.427 & 0.431 \\
0.3 & 0.502 & 0.433 & 0.444 \\
0.4 & 0.552 & 0.445 & 0.466 \\
0.5 & 0.636 & 0.462 & 0.492 \\
0.6 & 0.772 & 0.474 & 0.515 \\
0.7 & 0.994 & 0.471 & 0.530 \\
\hline
\end{tabular}
\end{minipage}
\end{table}

Equation~\eqref{eq:w} specifies a collection of weight functions $w(\cdot)$, indexed by a parameter $\gamma$
such that $\bar{\tau}=\e_{\obs}[w(X)\mu(X)]$ with $\e_{\obs}[w(X)]=1$.
To gain intuition about these weights,
in Fig.~\ref{fig:wt_plot_1d_sim} we plot one such weight function for each value of $\theta$.
Specifically,
we take
\begin{equation}
\label{eq:gamma}
\gamma = \frac{\e_{\obs}[\mu(X)-\bar{\mu}(X)]}{b^{\top}A^{-1}b} \cdot A^{-1}b
\end{equation}
where
\[
 A = \e_{\rct}[\Lambda(X)(\mu(X)-\bar{\mu}(X))^2\psi(\Delta(X))\psi(\Delta(X))^{\top}]
\]
and $b$ is as in~\eqref{eq:weight_constraint}.
This choice minimizes $\var_{\obs}(w(X))$ among all possible choices of $\gamma$ satisfying~\eqref{eq:weight_constraint}.
We observe that the weights are bounded within $(0,2)$ for $x \in [-4,4]$ across all values of $\theta$,
and indeed bounded within $(0.75,1.25)$ when $\theta \leq 0.5$,
supporting the intepretability of the estimand $\bar{\tau}$ in these simulation scenarios.

\begin{figure}[!htb]
\begin{center}
\includegraphics[width=0.75\linewidth]{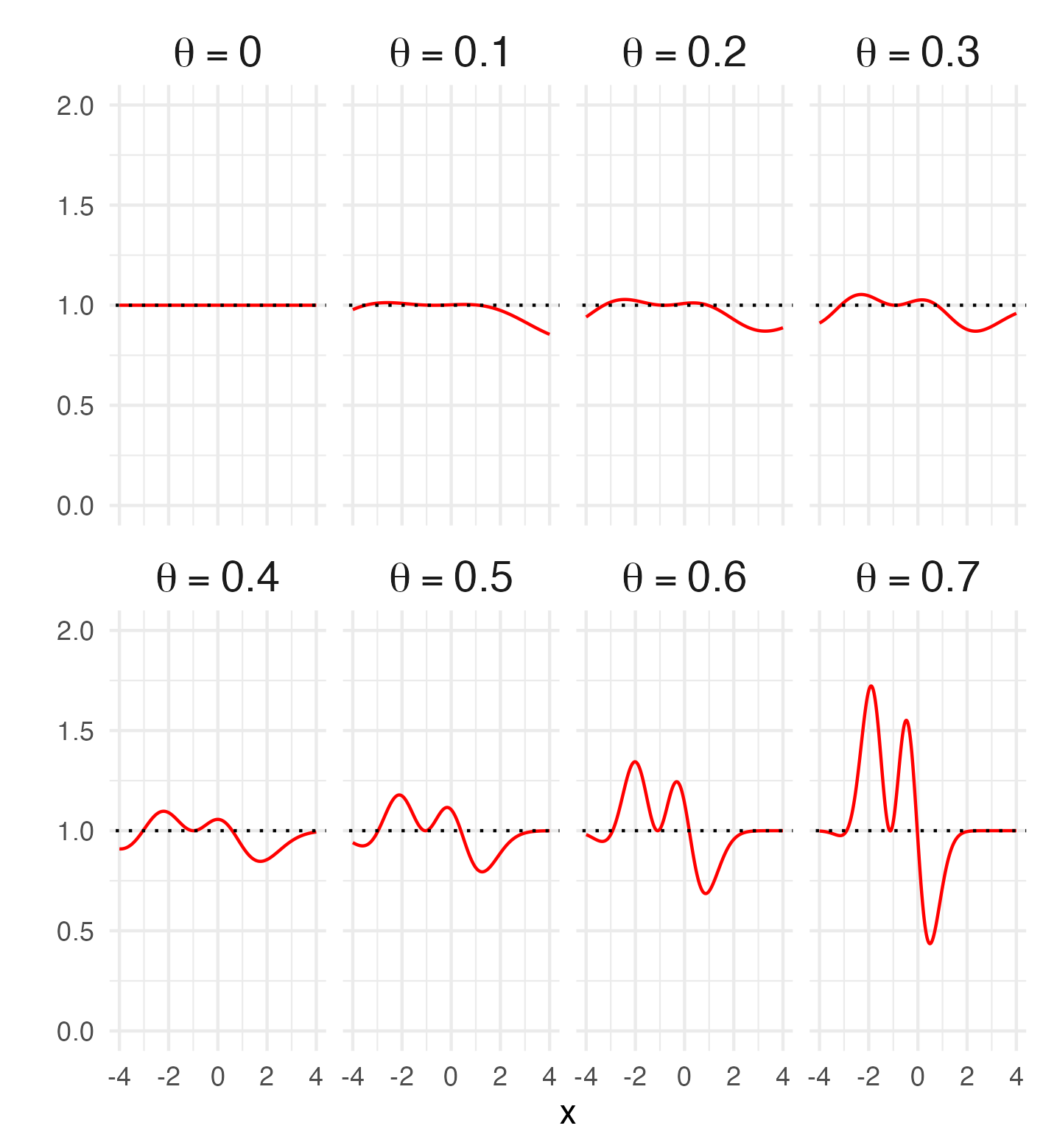}
\end{center}
\caption{For each value of $\theta$ studied in the simulations from Section~\ref{sec:1d_sim},
a plot of the weight function $w(\cdot)$ in~\eqref{eq:w} with $\gamma$ given by~\eqref{eq:gamma}.}
\label{fig:wt_plot_1d_sim}
\end{figure}

\subsection{Multivariate covariate simulation}
\label{sec:multivar_sim}

\begin{figure}[!htb]
\begin{center}
\includegraphics[width=\linewidth]{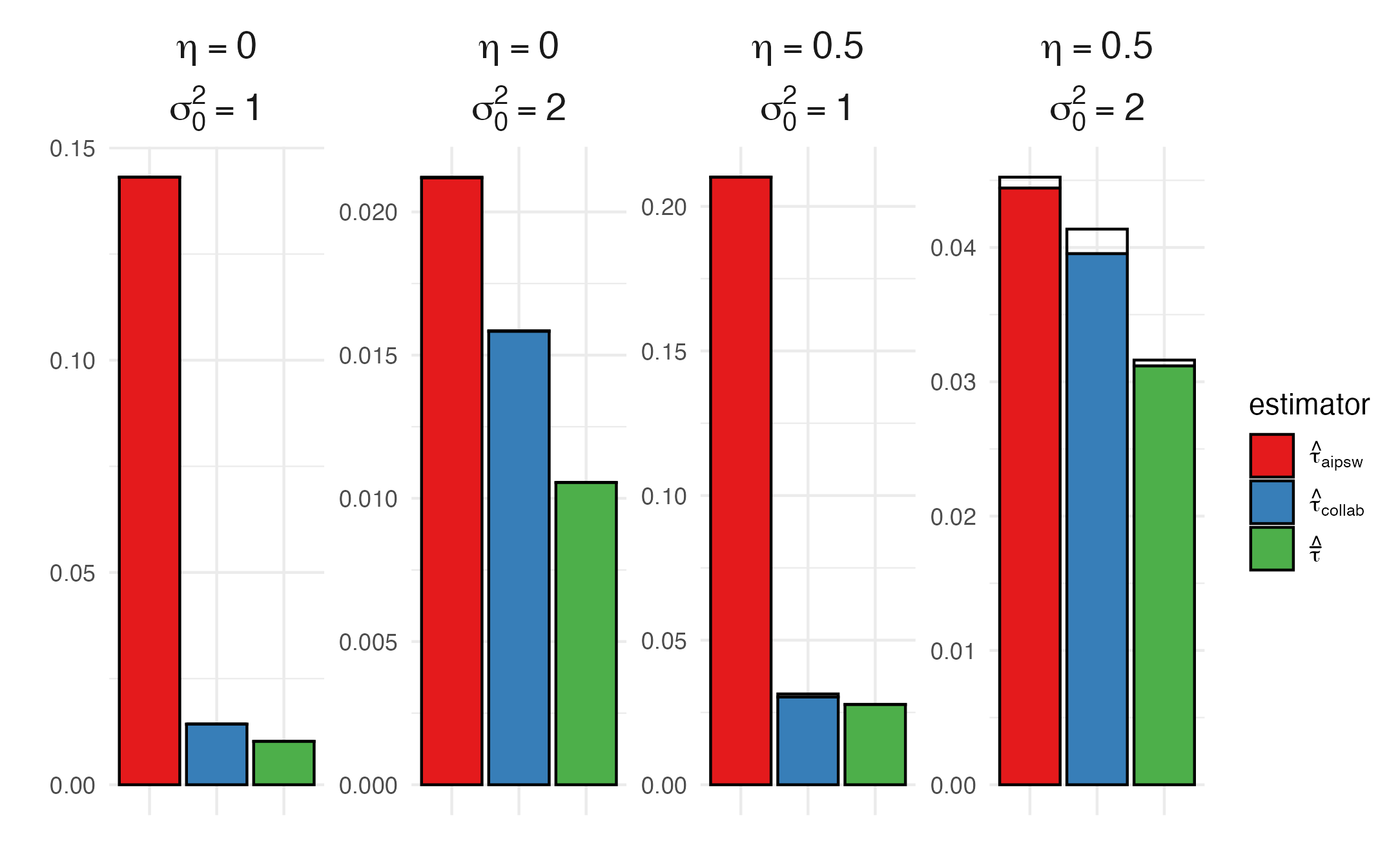}
\end{center}
\caption{Same as Fig.~\ref{fig:mse_plot_1d_sim}, but for the multivariate covariate simulations in Section~\ref{sec:multivar_sim}}
\label{fig:mse_plot_sim}
\end{figure}

\begin{table}[!htb]
\centering
\caption{
Same as Table~\ref{table:1d_ci_coverage} but for the multivariate covariate simulations in Section~\ref{sec:multivar_sim}
}
\label{table:ci_coverage}
\begin{minipage}{.45\linewidth}
\centering
\begin{tabular}{|c|c|c|c|c|}
\hline
$\eta$ & $\sigma_0^2$ & $\hat{\bar{\tau}}$ & $\hat{\tau}_{\aipsw}$ & $\hat{\tau}_{\collab}$ \\
\hline
0 & 1 & 94.9\% & 91.9\% & 91.9\% \\
0 & 2 & 94.9\% & 81.8\% & 89.9\% \\
0.5 & 1 & 92.5\% & 88.3\% & 80.3\% \\
0.5 & 2 & 90.8\% & 69.8\% & 76.4\% \\
\hline
\end{tabular}
\end{minipage}%
\hfill 
\begin{minipage}{.45\linewidth}
\centering
\begin{tabular}{|c|c|c|c|c|}
\hline
$\eta$ & $\sigma_0^2$ & $\hat{\bar{\tau}}$ & $\hat{\tau}_{\aipsw}$ & $\hat{\tau}_{\collab}$ \\
\hline
0 & 1 & 0.402 & 1.06 & 0.417 \\
0 & 2 & 0.406 & 0.396 & 0.419 \\
0.5 & 1 & 0.611 & 1.257 & 0.476 \\
0.5 & 2 & 0.616 & 0.453 & 0.486 \\
\hline
\end{tabular}
\end{minipage}
\end{table}

We now simulate four data generating processes (scenarios) with $d=10$ covariates.
For all scenarios, 
we have
\begin{align*}
\Delta(x) & = \expit(0.5x_1)\expit(1-0.5x_2) \\
\e_{\obs}[Y \mid X=x,Z=0] & = \sin(x_1) + \cos(x_2) + \sin(2x_1)\cos(2x_2)+0.5\sum_{i=1}^d x_i
\end{align*}
and observational propensity score
\[
\Pr_{\obs}(Z=1 \mid X=x) = \Phi\left(\frac{2x_2-x_1}{5}\right),
\]
where $\expit(x)=\exp(x)(1+\exp(x))^{-1}$ and $\Phi(\cdot)$ is the cumulative distribution function corresponding to the standard normal distribution.
We set 
\[
\mu(x) = 0.5 + 0.5\Delta(x) + \eta (x_1+1)(x_2+1)
\]
and have i.i.d. normally distributed covariates in the observational dataset with mean 0 and variance $\sigma_0^2$.
The four scenarios correspond to all combinations of $(\eta,\sigma_0^2)$ in $\{0, 0.5\} \times \{1,2\}$.
The covariates in the experimental dataset are i.i.d. normally distributed random variables with mean 0.5 and variance 1 for all scenarios.
As in Section~\ref{sec:1d_sim},
we fix the experimental and observational sample sizes at $n=100$ and $N=10,000$, respectively.

In all scenarios,
our point estimator $\hat{\bar{\tau}}$ exhibits the lowest MSE while the AIPSW estimator $\hat{\tau}_{\aipsw}$ exhibits the largest MSE (Fig.~\ref{fig:mse_plot_sim}).
Perhaps counterintuitively, the AIPSW estimator shows a substantially higher MSE in the scenarios where $\sigma_0^2=1$ than in those where $\sigma_0^2=2$,
even though the positivity violation is more severe in the latter case.
However, the collaborative estimator does not show the same pattern,
suggesting the explanation for this phenomenon lies in instability of nuisance estimation resulting from the curse of dimensionality.
In the two simulation scenarios with $\eta=0$, 
our linear regression calibration procedure is well-specified for the CATE and so $\tau=\bar{\tau} \approx 0.678$.
When $\eta=0.5$, the estimands differ significantly (with $\tau \approx 1.18$ and $\bar{\tau} \approx 1.79$).
Table~\ref{table:ci_coverage} shows that inference for $\tau$ (based on either the AIPSW or collaborative estimators) is considerably less reliable than our inference for $\bar{\tau}$,
especially when $\sigma_0^2=2$ (in which case the AIPSW and collaborative estimators have infinite asymptotic variance).
Our CI's based on $\hat{\bar{\tau}}$ also exhibit some degree of empirical undercoverage in the misspecified scenarios ($\eta=0.5$),
though in the well specified scenarios ($\eta=0$) they essentially achieve the nominal 95\% coverage unlike the other CI's.

\subsection{Crop rotation example}
\label{sec:data}

We now turn our attention to the dataset compiled by~\citet{kluger2022combining} to estimate the effect of soybean rotation on maize yields over a large agriculturally productive region of the Midwestern United States,
given by the extent of the colored points in Fig.~\ref{fig:map}.
The dataset contains 611 completely randomized blocked experiments conducted across $89$ (location, year) pairs under various growing conditions during the years 2000--2016,
sourced from~\citet{abendroth2017sustainable} and~\citet{bowles2020long}.
These are supplemented by about 1.31 million maize satellite-based yield estimates from the Scalable Crop Yield Mapper across different years at $174,869$ distinct locations (plotted in Fig.~\ref{fig:map}).
For each experiment,
the difference in average yields between the fields that grew soybeans in the prior year (i.e., were rotated) and the fields that grew maize in the prior year (i.e., were not rotated) was recorded.
We hereafter refer to this difference as the ``yield differential"; recall it corresponds to the variable $D_i$ in our notation.
A collection of covariates including year, latitude, longitude, and various weather and soil variables is available for all experiments and all satellite-based observations.
See Table 1 of~\citet{kluger2022combining} for more details.
Additionally, Section 2.1 of that paper provides additional information about the dataset.

We run 100 simulations.
Each simulation involves constructing a synthetic experimental dataset and a synthetic observational dataset by randomly sampling 10\% of the observations (without replacement) from the full experimental and observational datasets, respectively.
We compute each of the three estimators $\hat{\tau}_{\aipsw}$, $\hat{\tau}_{\collab}$, and $\hat{\bar{\tau}}$ (and the corresponding confidence intervals) using these synthetic datasets.
The true values of the estimands $\tau$ and $\bar{\tau}$ are taken to be the values of the estimators $\hat{\tau}_{\aipsw}$ and $\hat{\bar{\tau}}$,
respectively, on the full dataset.
They happen to be quite similar: we have $\tau=0.905$ and $\bar{\tau}=0.906$, respectively.
It is clear from Fig.~\ref{fig:crop_rotation_histograms} that the AIPSW estimator is extremely unstable,
with values spanning several orders of magnitude across the simulations.
This is not surprising given the clear positivity violation from the uneven spatial coverage of the experimental sites (Fig.~\ref{fig:map}).
The collaborative estimator fares quite a bit better, but the outcome regression $\hat{\bar{\tau}}$ is clearly much less variable and much more stable still (see Table~\ref{table:crop_rotation_coverage}).
This is perhaps unsurprising given the clear positivity violations.

\begin{figure}[!htb]
\begin{center}
\includegraphics[width=\linewidth]{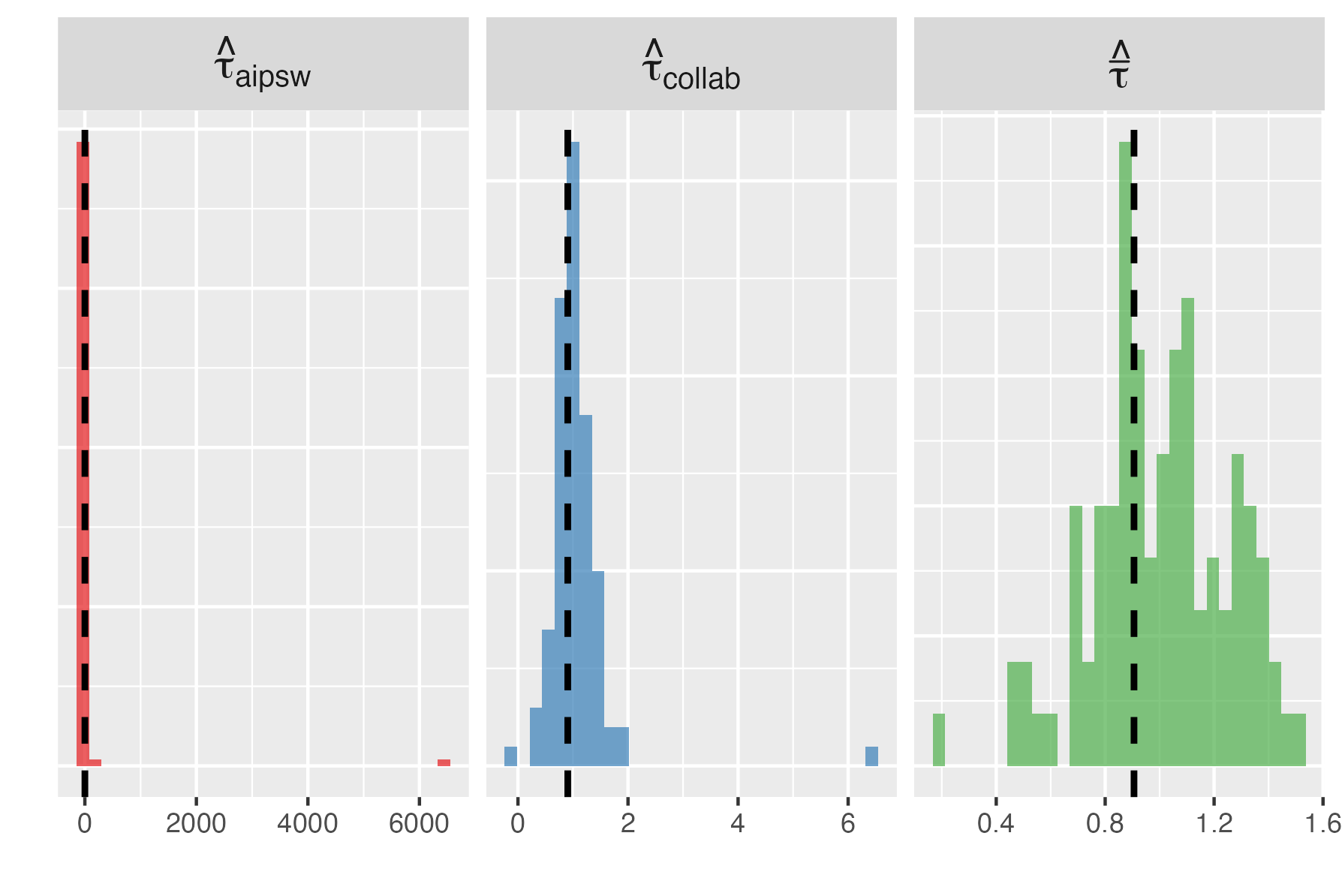}
\end{center}
\caption{Histograms of the three estimators across the 100 simulations from the crop rotation dataset described in Section~\ref{sec:data}. The black vertical dashed lines show the values of the true estimand --- the transported ATE $\tau \approx 0.905$ for the estimators $\hat{\tau}_{\aipsw}$ and $\hat{\tau}_{\collab}$,
and the projected ATE $\bar{\tau} \approx 0.906$ for the estimator $\hat{\bar{\tau}}$.}
\label{fig:crop_rotation_histograms}
\end{figure}

\begin{table}[!htb]
\centering
\caption{
The squared bias and variance of the three estimators across the 100 simulations from the crop rotation dataset described in Section~\ref{sec:data},
along with the empirical coverage rates and mean widths of the corresponding asymptotic 95\% confidence intervals. 
}
\label{table:crop_rotation_coverage}

\begin{tabular}{|c|c|c|c|}
\hline
& $\hat{\tau}_{\aipsw}$ & $\hat{\tau}_{\collab}$ & $\hat{\bar{\tau}}$  \\
\hline 
Squared bias & $4.43 \times 10^3$ & $0.0259$ & $0.00917$ \\
Variance & $4.15 \times 10^5$ & $0.397$ & $0.0630$ \\
Coverage & 98\% & 93\% & 96\% \\
Mean CI width & $269$ & $1.49$ & $0.985$ \\
\hline
\end{tabular}
\end{table}

\section{Conclusion}
\label{sec:conclusion}
We have proposed a method for transporting treatment effects in a setting with a small experimental dataset and a large observational dataset that has biased and/or noisy outcomes.
The key idea is to model treatment effect heterogeneity as a linear combination of a low dimensional basis expansion of a treatment-control contrast that can be readily learned from the large observational dataset.
When this modeling assumption holds,
we are able to consistently estimate the transported average treatment effect with a straightforward estimation procedure.
If the modeling assumption does not hold,
our procedure consistently estimates a transported estimand $\bar{\tau}$ that can be interpreted as a weighted average treatment effect under positivity,
analogous to the literature on estimating average treatment effects from a single unconfounded observational dataset under positivity violations.
Crucially, the estimand $\bar{\tau}$ remains identified without positivity.

Under regularity conditions, our proposed estimator achieves the efficiency bound for the transported estimand $\bar{\tau}$ when the observational dataset's size grows sufficiently quickly relative to the experimental dataset's size.
Most crucially for the practitioner,
the estimator does not require estimating any inverse propensity scores or weights, either directly or indirectly.
Instead, it only requires estimating the treatment-control contrast by applying any flexible machine learning method for the CATE to the large observational dataset, 
followed by applying ordinary least squares to the small experimental dataset.

Our methodology assumes independent and identically distributed observations.
This assumption forms the basis for our use of ordinary least squares for calibrating the observational dataset to the experimental dataset,
and the resulting characterization of the limiting estimand $\bar{\tau}$ as well as the proposed asymptotic inference methodology.
This is not necessarily realistic in settings like the crop rotation dataset of Section~\ref{sec:data},
where there are likely positive correlations in the outcomes between data points from the same location and/or year
that cannot be explained by observed covariates.
To account for this,~\citet{kluger2022combining} included random effects for year and site in their regressions calibrating the observational dataset to the experimental one.
It would be informative to carefully study inference and the limiting estimand for our estimator
(or propose alternative estimators)
under such alternative calibration procedures.

\section{Acknowledgments}
The author thanks Dan Kluger and David Lobell for sharing the experimental and observational dataset that inspired this work and forms the basis for the numerical study of Section~\ref{sec:data}. He is also grateful to Art Owen for early feedback on some of the ideas in this manuscript.

\bibliography{combine_obs_rct}
\bibliographystyle{abbrvnat}

\appendix

\section{Proofs}
\label{sec:proofs}

\subsection{Proof of Theorem~\ref{thm:estimation_and_inference}}

Throughout the proof we write
\begin{align*}
\psi_i & \equiv \psi(\Delta(X_i)), \quad i=1,\ldots,n+N \\
\hat{\psi}_i & \equiv \psi(\hat{\Delta}(X_i)), \quad i=1,\ldots,n \\
\hat{\psi}^{(-k)}_i & \equiv \psi(\hat{\Delta}^{(-k)}(X_i)), \quad i \in \ci_k, \quad k=1,\ldots,K
\end{align*}
for brevity.
We note that
\[
\sqrt{n}(\hat{\bar{\tau}}-\bar{\tau}) = \frac{\sqrt{n}}{N} \sum_{k=1}^K \sum_{i \in \ci_k}  \left(\hat{\bar{\beta}}^{\top}\hat{\psi}_i^{(-k)} - \bar{\beta}^{\top}\psi_i\right) + \frac{\sqrt{n}}{N} \sum_{i=n+1}^{n+N} \left(\bar{\mu}(X_i)-\bar{\tau}\right)
\]
where $\bar{\beta}$ is given by~\eqref{eq:mu_bar_explicit} since $\e_{\rct}[\psi(\Delta(X))\psi(\Delta(X))^{\top}]$ is invertible.
The rate conditions~\eqref{eq:psi_rate} imply $\e_{\obs}[\|\psi(\Delta(X))\|_2^2]<\infty$ and hence $\e_{\obs}[\bar{\mu}(X)^2]<\infty$,
so
\[
\frac{\sqrt{n}}{N}\sum_{i=n+1}^{n+N}(\bar{\mu}(X_i)-\bar{\tau}) = \sqrt{n}  \cdot \frac{1}{N}\sum_{i=n+1}^{n+N}(\bar{\mu}(X_i)-\bar{\tau}) =\sqrt{n} \cdot O_p(N^{-1/2}) = o_p(1)
\]
and it suffices to show that 
\[
\frac{\sqrt{n}}{N} \sum_{k=1}^K \sum_{i \in \ci_k}  \left(\hat{\bar{\beta}}^{\top}\hat{\psi}^{(-k)}_i - \bar{\beta}^{\top}\psi_i\right) \stackrel{d}{\rightarrow} \mathcal{N}(0,\Sigma).
\]
To that end,
we write
\begin{align*}
\frac{\sqrt{n}}{N} \sum_{k=1}^K \sum_{i \in \ci_k}  \left(\hat{\bar{\beta}}^{\top}\hat{\psi}_i^{(-k)} - \bar{\beta}^{\top}\psi_i\right) = \sqrt{n}(\hat{\bar{\beta}}-\bar{\beta})^{\top} \cdot \frac{1}{N} \sum_{k=1}^K \sum_{i \in \ci_k} \hat{\psi}^{(-k)}_i + \frac{\sqrt{n}}{N} \sum_{k=1}^K\sum_{i \in \ci_k} \bar{\beta}^{\top}\left(\hat{\psi}^{(-k)}_i-\psi_i\right).
\end{align*}

First, we examine the behavior of $\sqrt{n}(\hat{\bar{\beta}}-\bar{\beta})$. We have that
\[
\hat{\bar{\beta}} = \left(\frac{1}{n} \sum_{i=1}^n \hat{\psi}_i\hat{\psi}_i^{\top}\right)^{-1} \left(\frac{1}{n} \sum_{i=1}^n D_i\hat{\psi}_i\right).
\]
and so
\[
\sqrt{n}(\hat{\bar{\beta}}-\bar{\beta}) = \underbrace{\left(\frac{1}{n}\sum_{i=1}^n \hat{\psi}_i\hat{\psi}_i^{\top}\right)^{-1}}_{\hat{A}_{n,N}} \times \underbrace{\left(\frac{1}{\sqrt{n}} \sum_{i=1}^n \left(D_i-\hat{\psi}_i^{\top}\bar{\beta}\right)\hat{\psi}_i\right)}_{\hat{b}_{n,N}}. 
\]
We can split the RHS into several terms:
\begin{align*}
& \sqrt{n}(\hat{\bar{\beta}}-\bar{\beta}) = \underbrace{\left(\frac{1}{n}\sum_{i=1}^n \psi_i\psi_i^{\top}\right)^{-1} }_{A_n}\times \underbrace{\left(\frac{1}{\sqrt{n}} \sum_{i=1}^n \left(D_i-\psi_i^{\top}\bar{\beta}\right)\psi_i\right)}_{b_n} + (\hat{A}_{n,N}-A_n)b_n + \hat{A}_{n,N}(\hat{b}_{n,N}-b_n).
\end{align*}
By the law of large numbers and the central limit theorem
\begin{align*}
A_n & = \left(\e_{\rct}[\psi(\Delta(X))\psi(\Delta(X))^{\top}]\right)^{-1} + o_p(1) \\
b_n & \stackrel{d}{\rightarrow} \mathcal{N}\left(0, \e_{\rct}[(D-\bar{\mu}(X))^2\psi(\Delta(X))\psi(\Delta(X))^{\top}]\right)
\end{align*}
(note that $\e_{\rct}[(D-\psi(\Delta(X))^{\top}\bar{\beta})\psi(\Delta(X))]=\e_{\rct}[(D-\bar{\mu}(X))\psi(\Delta(X))]=0$ by the first-order conditions defining $\bar{\mu}(\cdot)=\psi(\Delta(\cdot))^{\top}\bar{\beta}$).
We conclude
\[
\begin{split}
A_nb_n \stackrel{d}{\rightarrow} &\mathcal{N}\left(0, \left(\e_{\rct}[\psi(\Delta(X))\psi(\Delta(X))^{\top}]\right)^{-1}\e_{\rct}[(D-\bar{\mu}(X))^2\psi(\Delta(X))\psi(\Delta(X))^{\top}] \right. \\
& \qquad \left. \left(\e_{\rct}[\psi(\Delta(X))\psi(\Delta(X))^{\top}]\right)^{-1}\right).
\end{split}
\]
Next, we note that 
\[
\hat{A}_{n,N}^{-1}-A_n^{-1} = \frac{1}{n}\sum_{i=1}^n \hat{\psi}_i\left[\hat{\psi}_i-\psi_i\right]^{\top} + \frac{1}{n}\sum_{i=1}^n \left[\hat{\psi}_i-\psi_i\right]\psi_i^{\top}
\]
Letting $\psi_{i,j}$ denote the $j$-th component of $\psi_i$ (and similarly defining $\hat{\psi}_{i,j})$ as the $j$-th component of $\hat{\psi}_i$),
for each $1 \leq j,j' \leq p$ we have
\begin{align*}
\left|\frac{1}{n} \sum_{i=1}^n \psi_{i,j}\left[\hat{\psi}_{i,j'}-\psi_{i,j'}\right]\right| &  \leq \sqrt{\frac{1}{n} \sum_{i=1}^n \psi_{i,j}^2}\sqrt{\frac{1}{n} \sum_{i=1}^n \left[\hat{\psi}_{i,j'}-\psi_{i,j'}\right]^2} \\
& = O_p(1) \cdot o_p(n^{-1/2}) = o_p(1)
\end{align*}
where the inequality above follows by Cauchy-Schwarz,
and the $o_p(n^{-1/2})$ estimate follows from the rate condition~\eqref{eq:psi_rate} and Lemma 4 of~\citet{li2024efficient}
(which applies since $\hat{\Delta}(\cdot)$ is independent of all of the experimental observations).
By an identical argument
\begin{equation}
\label{eq:psi_CS}
     \left|\frac{1}{n} \sum_{i=1}^n \hat{\psi}_{i,j}\left[\hat{\psi}_{i,j'}-\psi_{i,j'} \right]\right| = o_p(n^{-1/2})=o_p(1)
\end{equation}
as well.
We conclude $\hat{A}_{n,N}^{-1}-A_n^{-1}=o_p(1)$ and hence $\hat{A}_{n,N}-A_n=o_p(1)$ and $(\hat{A}_{n,N}-A_n)b_n=o_p(1)$.
Finally, we write
\[
\hat{b}_{n,N}-b_n = \frac{1}{\sqrt{n}} \sum_{i=1}^n [\psi_i-\hat{\psi}_i]^{\top}\bar{\beta} \cdot \hat{\psi}_i +(D_i-\psi_i^{\top}\bar{\beta})(\hat{\psi}_i-\psi_i).
\]
By~\eqref{eq:psi_CS} we know that
\[
\frac{1}{\sqrt{n}} \sum_{i=1}^n \hat{\psi}_i [\psi_i-\hat{\psi}_i]^{\top}\bar{\beta} = o_p(1).
\]
Additionally
\[
d_{n,j} := \frac{1}{\sqrt{n}}\sum_{i=1}^n \left(D_i-\psi_i^{\top}\bar{\beta}\right)\left(\hat{\psi}_{i,j}-\psi_{i,j}\right)
\]
satisfies
\[
|d_{n,j}| \leq \sqrt{n} \cdot \sqrt{\frac{1}{n} \sum_{i=1}^n \left(D_i-\psi_i^{\top}\bar{\beta}\right)^2}\sqrt{\frac{1}{n}\sum_{i=1}^n \left(\hat{\psi}_{i,j}-\psi_{i,j}\right)^2} = o_p(1)
\]
for all $j=1,\ldots,p$ by Cauchy-Schwarz, the rate condition~\eqref{eq:psi_rate}, and Lemma 4 of~\citet{li2024efficient} once again.
Hence we can conclude $\hat{b}_{n,N}-b_n=o_p(1)$ and so 
\[
\hat{A}_{n,N}(\hat{b}_{n,N}-b_n)=(A_n+o_p(1))(\hat{b}_{n,N}-b_n)=o_p(1)
\]
as well. This shows that
\begin{equation}
\label{eq:beta_hat_expansion}
\begin{split}
\sqrt{n}(\hat{\bar{\beta}}-\bar{\beta}) & = A_nb_n + o_p(1) \\
& \stackrel{d}{\rightarrow} \mathcal{N}\left(0, \left(\e_{\rct}[\psi(\Delta(X))\psi(\Delta(X))^{\top}]\right)^{-1} \right. \\
& \qquad \left. \e_{\rct}[(D-\bar{\mu}(X))^2\psi(\Delta(X))\psi(\Delta(X))^{\top}]\left(\e_{\rct}[\psi(\Delta(X))\psi(\Delta(X))^{\top}]\right)^{-1}\right).
\end{split}
\end{equation}

Next, we note that for each $k=1,\ldots,K$ and $j=1,\ldots,p$, we have
\begin{align}
\frac{1}{|\ci_k|} \sum_{i \in \ci_k} \left|\hat{\psi}_{i,j}^{(-k)}- \psi_{i,j}\right| & \leq \sqrt{\frac{1}{|\ci_k|} \sum_{i \in \ci_k} \left(\hat{\psi}_{i,j}^{(-k)}- \psi_{i,j}\right)^2} \text{ by Jensen's inequality} \nonumber \\
& = o_p(n^{-1/2}) \text{ by Lemma 4 of~\citet{li2024efficient}} \label{eq:psi_hat_vs_psi}
\end{align}
and hence
\begin{align*}
\frac{1}{|\ci_k|} \sum_{i \in \ci_k} \bar{\beta}^{\top}\left(\hat{\psi}_{i,j}^{(-k)}- \psi_{i,j}\right)  = o_p(n^{-1/2})
\end{align*}
as well.
With $|\ci_k| = O(N)$ for each fold $k=1,\ldots,K$
we conclude that
\[
\frac{\sqrt{n}}{N} \sum_{k=1}^K \sum_{i \in \ci_k} \bar{\beta}^{\top}\left(\hat{\psi}_{i,j}^{(-k)}-\psi_{i,j}\right) = \sum_{k=1}^K \frac{|\ci_k|}{N} \cdot \frac{\sqrt{n}}{|\ci_k|} \sum_{i \in \ci_k} \bar{\beta}^{\top}\left(\hat{\psi}_{i,j}^{(-k)}-\psi_{i,j}\right) = o_p(1)
\]
and so in fact
\begin{align*}
\frac{\sqrt{n}}{N} \sum_{k=1}^K \sum_{i \in \ci_k}  \left(\hat{\bar{\beta}}^{\top}\hat{\psi}_i^{(-k)} - \bar{\beta}^{\top}\psi_i\right) & = \sqrt{n}(\hat{\bar{\beta}}-\bar{\beta})^{\top} \cdot \frac{1}{N} \sum_{k=1}^K \sum_{i \in \ci_k} \hat{\psi}^{(-k)}_i + o_p(1) \\
& \stackrel{d}{\rightarrow} \mathcal{N}(0,\Sigma)
\end{align*}
where the final convergence uses~\eqref{eq:beta_hat_expansion} along with the fact that 
\begin{equation}
\label{eq:avg_psi_hat_obs}
\frac{1}{N}\sum_{k=1}^K \sum_{i \in \ci_k} \hat{\psi}_i^{(-k)}=\e_{\obs}[\psi(\Delta(X))]+o_p(1)
\end{equation}
by the law of large numbers and~\eqref{eq:psi_hat_vs_psi}.

To prove asymptotic validity of the confidence interval,
it suffices to show that 
\[
n\hat{V}_{n,N} = \Sigma + o_p(1).
\]
To that end,
we first show that
\begin{equation}
\label{eq:ci_meat}
\frac{1}{n}\sum_{i=1}^n \hat{r}_i\hat{r}_i^{\top} = \e_{\rct}\left[(D-\bar{\mu}(X))^2\psi(\Delta(X))\psi(\Delta(X))^{\top}\right] + o_p(1).
\end{equation}
Since
\[
\frac{1}{n} \sum_{i=1}^n (D_i-\bar{\mu}(X_i))^2\psi_i\psi_i^{\top} =  \e_{\rct}\left[(D-\bar{\mu}(X))^2\psi(\Delta(X))\psi(\Delta(X))^{\top}\right] + o_p(1)
\]
by the law of large numbers,
it suffices to show that
\begin{equation}
\label{eq:ri_hat}
\frac{1}{n} \sum_{i=1}^n \hat{r}_i\hat{r}_i^{\top} - (D_i-\bar{\mu}(X_i))^2\psi_i\psi_i^{\top} = o_p(1).
\end{equation}
To do so, we fix $1 \leq j,j' \leq p$ and write
\begin{align*}
& \frac{1}{n} \sum_{i=1}^n \left(D_i-\hat{\bar{\mu}}(X_i)\right)^2\hat{\psi}_{i,j}\hat{\psi}_{i,j'} - (D_i-\bar{\mu}(X_i))^2\psi_{i,j}\psi_{i,j'} = t_n+u_n
\end{align*}
where
\begin{align*}
t_n & = \frac{1}{n} \sum_{i=1}^n \left(2D_i-\bar{\mu}(X_i)-\hat{\bar{\mu}}(X_i)\right)\left(\bar{\mu}(X_i)-\hat{\bar{\mu}}(X_i)\right)\psi_{i,j}\psi_{i,j'} \\
& = 2 \cdot \frac{1}{n} \sum_{i=1}^n \left(D_i-\bar{\mu}(X_i)\right)\left(\bar{\mu}(X_i)-\hat{\bar{\mu}}(X_i)\right)\psi_{i,j}\psi_{i,j'} + \frac{1}{n}\sum_{i=1}^n \left(\bar{\mu}(X_i)-\hat{\bar{\mu}}(X_i)\right)^2\psi_{i,j}\psi_{i,j'} \\
u_n & = \frac{1}{n} \sum_{i=1}^n \left(D_i-\hat{\bar{\mu}}(X_i)\right)^2\left(\hat{\psi}_{i,j}\hat{\psi}_{i,j'}-\psi_{i,j}\psi_{i,j'}\right)
\end{align*}
and show that $t_n+u_n=o_p(1)$.

To bound the first term of $t_n$, we apply Cauchy-Schwarz and then H\"older's inequality:
\begin{align*}
& \left| \frac{1}{n} \sum_{i=1}^n \left(D_i-\bar{\mu}(X_i)\right)\left(\bar{\mu}(X_i)-\hat{\bar{\mu}}(X_i)\right)\psi_{i,j}\psi_{i,j'} \right| \\
& \quad \leq \left(\frac{1}{n}\sum_{i=1}^n (D_i-\bar{\mu}(X_i))^2\right)^{\frac{1}{2}} \times \left(\frac{1}{n} \sum_{i=1}^n \left(\bar{\mu}(X_i)-\hat{\bar{\mu}}(X_i)\right)^2\psi_{i,j}^2\psi_{i,j'}^2\right)^{\frac{1}{2}} \\
& \quad \leq  \left(\frac{1}{n}\sum_{i=1}^n (D_i-\bar{\mu}(X_i))^2\right)^{\frac{1}{2}} \times \left(\frac{1}{n}\sum_{i=1}^n \left|\hat{\bar{\mu}}(X_i)-\bar{\mu}(X_i)\right|^{2+\epsilon}\right)^{\frac{1}{2+\epsilon}} \times \left(\frac{1}{n} \sum_{i=1}^n |\psi_{i,j}|^{2q}||\psi_{i,j'}|^{2q}\right)^{\frac{1}{2q}}
\end{align*}
where $q:=1+2/\epsilon$ satisfies
\[
\frac{2}{2+\epsilon} + \frac{1}{q}=1.
\]
Evidently
\[
\frac{1}{n} \sum_{i=1}^n (D_i-\bar{\mu}(X_i))^2 = O_p(1)
\]
since $D_i$ and $\bar{\mu}(X_i)$ are both square integrable while
\begin{equation}
\label{eq:all_moments}
\frac{1}{n} \sum_{i=1}^n |\psi_{i,j}|^{k}||\psi_{i,j'}|^{k} = O_p(1), \quad \forall k < \infty
\end{equation}
by the assumption that $\psi_j(\Delta(X))$ and $\psi_{j'}(\Delta(X))$ have finite moments of all orders in the RCT.
Thus,
\begin{align*}
& \left| \frac{1}{n} \sum_{i=1}^n \left(D_i-\bar{\mu}(X_i)\right)\left(\bar{\mu}(X_i)-\hat{\bar{\mu}}(X_i)\right)\psi_{i,j}\psi_{i,j'} \right| = O_p(1) \times \left(\frac{1}{n}\sum_{i=1}^n \left|\hat{\bar{\mu}}(X_i)-\bar{\mu}(X_i)\right|^{2+\epsilon}\right)^{\frac{1}{2+\epsilon}}
\end{align*}
We bound
\begin{align*}
\left(\frac{1}{n} \sum_{i=1}^n \left|\hat{\psi}_i^{\top}\hat{\bar{\beta}}-\psi_i^{\top}\bar{\beta}\right|^{2+\epsilon}\right)^{\frac{1}{2+\epsilon}} & \leq \left(\frac{1}{n} \sum_{i=1}^n \left|(\hat{\psi}_i-\psi_i)^{\top}\hat{\bar{\beta}}\right|^{2+\epsilon}\right)^{\frac{1}{2+\epsilon}} + \left(\frac{1}{n} \sum_{i=1}^n \left|(\hat{\bar{\beta}}-\bar{\beta})^{\top}\psi_i\right|^{2+\epsilon}\right)^{\frac{1}{2+\epsilon}} \\
& \leq \sum_{\tilde{j}=1}^p \left(\frac{1}{n} \sum_{i=1}^n \left|(\hat{\psi}_{i,\tilde{j}}-\psi_{i,\tilde{j}})\hat{\bar{\beta}}_{\tilde{j}}\right|^{2+\epsilon}\right)^{\frac{1}{2+\epsilon}} + \left(\frac{1}{n} \sum_{i=1}^n \left|(\hat{\bar{\beta}}_{\tilde{j}}-\bar{\beta}_{\tilde{j}})\psi_{i,\tilde{j}}\right|^{2+\epsilon}\right)^{\frac{1}{2+\epsilon}}
\end{align*}
by Minkowski's inequality.
Evidently for each $\tilde{j}=1,\ldots,p$ we have
\begin{align*}
\frac{1}{n} \sum_{i=1}^n \left|(\hat{\psi}_{i,\tilde{j}}-\psi_{i,\tilde{j}})\hat{\bar{\beta}}_{\tilde{j}}\right|^{2+\epsilon} = |\hat{\bar{\beta}}_{\tilde{j}}|^{2+\epsilon} \cdot \frac{1}{n} \sum_{i=1}^n \left|\hat{\psi}_{i,\tilde{j}}-\psi_{i,\tilde{j}}\right|^{2+\epsilon} = O_p(1) \cdot o_p(1) = o_p(1)
\end{align*}
since $\|\psi_{\tilde{j}} \circ \hat{\Delta}-\psi_{\tilde{j}} \circ \Delta\|_{2+\epsilon,\rct} = o_p(1)$,
and similarly 
\[
\frac{1}{n} \sum_{i=1}^n \left|(\hat{\bar{\beta}}_{\tilde{j}}-\bar{\beta}_{\tilde{j}})\psi_{i,\tilde{j}}\right|^{2+\epsilon} = \left|\hat{\bar{\beta}}_{\tilde{j}}-\bar{\beta}_{\tilde{j}}\right|^{2+\epsilon} \cdot \frac{1}{n} \sum_{i=1}^n \left|\psi_{i,\tilde{j}}\right|^{2+\epsilon} = O_p(n^{-\frac{1}{2} \cdot (2+\epsilon)}) \cdot O_p(1) = o_p(1)
\]
by~\eqref{eq:beta_hat_expansion}.
We conclude that 
\begin{equation}
\label{eq:mu_hat_vs_mu_rct}
\left(\frac{1}{n} \sum_{i=1}^n \left|\hat{\bar{\mu}}(X_i)-\bar{\mu}(X_i)\right|^{2+\epsilon}\right)^{\frac{1}{2+\epsilon}} = \left(\frac{1}{n} \sum_{i=1}^n \left|\hat{\psi}_i^{\top}\hat{\bar{\beta}}-\psi_i^{\top}\bar{\beta}\right|^{2+\epsilon}\right)^{\frac{1}{2+\epsilon}} = o_p(1)
\end{equation}
and so in fact
\[
\left| \frac{1}{n} \sum_{i=1}^n \left(D_i-\bar{\mu}(X_i)\right)\left(\bar{\mu}(X_i)-\hat{\bar{\mu}}(X_i)\right)\psi_{i,\tilde{j}}\psi_{i,\tilde{j}'} \right| = O_p(1) \times o_p(1) = o_p(1).
\]
For the second term of $t_n$, we have
\begin{align*}
\left|\frac{1}{n} \sum_{i=1}^n (\bar{\mu}(X_i)-\hat{\bar{\mu}}(X_i))^2 \psi_{i,j}\psi_{i,j'}\right| & \leq \left(\frac{1}{n} \sum_{i=1}^n \left|\hat{\bar{\mu}}(X_i)-\bar{\mu}(X_i)\right|^{2+\epsilon}\right)^{\frac{2}{2+\epsilon}} \left(\frac{1}{n}\sum_{i=1}^n |\psi_{i,j}|^{q}|\psi_{i,j'}|^q\right)^{\frac{1}{q}} \\
& = o_p(1) \cdot O_p(1) = o_p(1)
\end{align*}
by H\"older's inequality,~\eqref{eq:all_moments}, and~\eqref{eq:mu_hat_vs_mu_rct}.
Thus, we have shown $t_n=o_p(1)$.

To show $u_n=o_p(1)$, by Cauchy-Schwarz
\begin{align*}
|u_n| & \leq  \left(\frac{1}{n} \sum_{i=1}^n |D_i-\hat{\bar{\mu}}(X_i)|^4\right)^{1/2} \left(\frac{1}{n} \sum_{i=1}^n \left|\hat{\psi}_{i,j}\hat{\psi}_{i,j'}-\psi_{i,j}\psi_{i,j'}\right|^2\right)^{1/2}.
\end{align*}
By Minkowski
\begin{align*}
\left(\frac{1}{n} \sum_{i=1}^n |D_i-\hat{\bar{\mu}}(X_i)|^4\right)^{1/4} & \leq \left(\frac{1}{n} \sum_{i=1}^n D_i^4\right)^{1/4} + \left(\frac{1}{n} \sum_{i=1}^n \left|\psi(\hat{\Delta}(X_i))^{\top}\hat{\bar{\beta}}\right|^4\right)^{1/4}  \\
& \leq O_p(1) + \|\hat{\bar{\beta}}\|_2 \left(\frac{1}{n} \sum_{i=1}^n \left\|\psi(\hat{\Delta}(X_i))\right\|_2^4\right)^{1/4} \\
& = O_p(1)
\end{align*}
using the fact that $\e_{\rct}[|D|^4]<\infty$ and that~\eqref{eq:all_orders} holds with $k=4$. 
Another application of Minkowski shows that
\begin{align*}
\left(\frac{1}{n} \sum_{i=1}^n \left|\hat{\psi}_{i,j}\hat{\psi}_{i,j'}-\psi_{i,j}\psi_{i,j'}\right|^2\right)^{1/2} & \leq \left(\frac{1}{n} \sum_{i=1}^n \left|(\hat{\psi}_{i,j}-\psi_{i,j})\hat{\psi}_{i,j'}\right|^2\right)^{1/2} + \left(\frac{1}{n} \sum_{i=1}^n \left|\psi_{i,j}(\hat{\psi}_{i,j'}-\psi_{i,j'})\right|^2\right)^{1/2}
\end{align*}
Then we use H\"older's inequality to bound
\begin{align*}
\frac{1}{n} \sum_{i=1}^n \left|\psi_{i,j}(\hat{\psi}_{i,j'}-\psi_{i,j'})\right|^2 & \leq \left(\frac{1}{n}\sum_{i=1}^n \left|\hat{\psi}_{i,j'}-\psi_{i,j'}\right|^{2+\epsilon}\right)^{\frac{2}{2+\epsilon}}\left(\frac{1}{n} \sum_{i=1}^n \left|\psi_{i,j}\right|^{2q}\right)^{\frac{1}{q}} \\
& = o_p(1) \cdot O_p(1) = o_p(1)
\end{align*}
where the rate estimates hold by the assumptions that 
\[
\|\psi_{j'} \circ \hat{\Delta} - \psi_{j'} \circ \Delta\|_{2+\epsilon,\rct} = o_p(1)
\]
and $\e_{\rct}[\|\psi(\Delta(X))\|^{2q}]< \infty$.
By an identical argument,
except using~\eqref{eq:all_orders} with $k=2q$ in place of the assumption $\e_{\rct}[\|\psi(\Delta(X))\|^{2q}]< \infty$,
we have
\begin{align*}
\frac{1}{n} \sum_{i=1}^n \left|(\hat{\psi}_{i,j}-\psi_{i,j})\hat{\psi}_{i,j'}\right|^2 & \leq \left(\frac{1}{n}\sum_{i=1}^n \left|\hat{\psi}_{i,j}-\psi_{i,j}\right|^{2+\epsilon}\right)^{\frac{2}{2+\epsilon}}\left(\frac{1}{n} \sum_{i=1}^n \left|\hat{\psi}_{i,j'}\right|^{2q}\right)^{\frac{1}{q}} \\
& = o_p(1) \cdot O_p(1) = o_p(1).
\end{align*}
We conclude that in fact $u_n=o_p(1)$,
and so we have shown~\eqref{eq:ri_hat}
and thus~\eqref{eq:ci_meat}.

Now we write
\[
n\hat{V}_{n,N} = \hat{a}_{n,N}^{\top} \left(\frac{1}{n} \sum_{i=1}^n \hat{r}_i\hat{r}_i^{\top}\right)\hat{a}_{n,N} + \frac{n}{N} \cdot \frac{1}{N} \sum_{k=1}^K \sum_{i \in \ci_k} \left(\hat{\bar{\mu}}^{(-k)}(X_i)-\hat{\bar{\tau}}\right)^2
\]
to see that in light of~\eqref{eq:ci_meat},
to conclude the proof it suffices to show that
\begin{align*}
\hat{a}_{n,N} = \bar{\alpha} + o_p(1) \qquad \text{and} \qquad \frac{1}{N} \sum_{k=1}^K \sum_{i \in \ci_k} \left(\hat{\bar{\mu}}^{(-k)}(X_i)-\hat{\bar{\tau}}\right)^2 = O_p(1).
\end{align*}
But the statement that $\hat{a}_{n,N} = \bar{\alpha}+o_p(1)$ follows immediately from~\eqref{eq:avg_psi_hat_obs} and the fact that 
\[
\hat{A}_{n,N}=A_n+o_p(1)=\left(\e_{\rct}\left[\psi(\Delta(X))\psi(\Delta(X))^{\top}\right]\right)^{-1} + o_p(1)
\]
as shown in the first part of this proof.
Finally,
the assertion
\[
\frac{1}{N} \sum_{k=1}^K \sum_{i \in \ci_k} \left(\hat{\bar{\mu}}^{(-k)}(X_i)-\hat{\bar{\tau}}\right)^2 = O_p(1)
\]
is apparent since
\[
\frac{1}{|\ci_k|} \sum_{i \in \ci_k} \left(\hat{\mu}^{(-k)}(X_i)\right)^2
\]
has finite expectation for all $k=1,\ldots,K$ by assumption.

\subsection{Proof of Theorem~\ref{thm:efficiency}}

In the setting of Theorem~\ref{thm:efficiency},
there are no structural restrictions on the distribution of $W_i=(Q_i,X_i,Y_i)$,
i.e., we have a fully nonparametric model.
Then the corresponding semiparametric tangent space is equal to $\ch_W^0$,
the set of all mean zero, square-integrable functions of $W=(Q,X,Y)$ under the true data generating distribution $P^*$~\citep{bickel1993efficient, kennedy2024semiparametric}.
Hence, for the proof of the efficiency bound
it suffices to show that
\begin{equation}
\label{eq:pathwise_derivative}
\frac{\partial \bar{\tau}(0)}{\partial \theta} = \e\left[\psi(W;\bar{\tau},\eta)s(W;0)\right]
\end{equation}
for all smooth parametric submodels $\{P_{\theta} \mid \theta \in \Theta\}$ with corresponding score $s(w;\theta)$,
where for each $\theta \in \Theta$,
$\bar{\tau}(\theta)$ denotes the value of the parameter $\bar{\tau}$ under the distribution $P_{\theta}$.
Following standard texts on semiparametric theory~\citep{bickel1993efficient, van2000asymptotic},
a parametric submodel $\{P_{\theta} \mid \theta \in \Theta\}$
is a collection of distributions $P_{\theta}$ indexed by $\theta$ in an open interval $\Theta$ containing 0 such that $P_0=P^*$,
the true data-generating process.
Above and throughout this proof, for brevity we use the notation
\[
\frac{\partial F(a)}{\partial \theta} \equiv \frac{\partial}{\partial \theta} F(\theta) \Big|_{\theta=a}.
\]
for any function $F$.

We fix an arbitrary parametric submodel, which contains densities of the form
\[
p(w;\theta) = f(q;\theta)g(x \mid q;\theta) h(y \mid x,q;\theta)
\]
where $w=(q,x,y)$ and $f$, $g$, and $h$ are the density of $Q$, the conditional density of $X$ given $Q$, and the conditional density of $Y$ given $(X,Q)$, respectively with respect to measures $\lambda_Q$, $\lambda_{X \mid Q}$, and $\lambda_{Y \mid QX}$, respectively.
Then
\begin{align*}
\bar{\tau}(\theta) & = \left[\frac{\int \indic(q \neq 2)\psi(\Delta(x;\theta))f(q;\theta)g(x \mid q;\theta) d(\lambda_Q \times \lambda_{X \mid Q})(q,x)}{1-f(2;\theta)}\right]^{\top}\bar{\beta}(\theta), \quad \\
\Delta(x;\theta) & = \int y \left[h(y \mid x,1;\theta)-h(y \mid x,0;\theta)\right]  d\lambda_{Y \mid QX}(y), \quad \\
\bar{\beta}(\theta) & = \left[\int g(x \mid 2;\theta) \psi(\Delta(x;\theta))\psi(\Delta(x;\theta))^{\top}  d\lambda_{X \mid Q}(x)\right]^{-1}  \times \\
& \left[\int g(x \mid 2;\theta)h(y \mid x,2;\theta)y\psi(\Delta(x;\theta))d(\lambda_{X \mid Q} \times \lambda_{Y \mid QX})(x,y)  \right].
\end{align*}
It follows by the product rule that
\begin{align*}
\frac{\partial \bar{\tau}(0)}{\partial \theta} & = \left(\e[\psi(\Delta(X)) \mid Q \neq 2]\right)^{\top} \frac{\partial \bar{\beta}(0)}{\partial \theta} \\
& + \left[\frac{\partial}{\partial \theta} \left(\frac{\int \indic(q \neq 2)f(q;0)g(x \mid q;0)\psi(\Delta(x;0)) d(\lambda_Q \times \lambda_{X \mid Q})(q,x)}{1-f(2;0)}\right) \right]^{\top} \bar{\beta}
\end{align*}
where above and hereafter in this proof, all expectations are with respect to the true data-generating distribution $P_0$.

Differentiating under the integral sign, we get
\begin{align*}
\frac{\partial \bar{\beta}(0)}{\partial \theta} & = \left(\e\left[\psi(\Delta(X))\psi(\Delta(X))^{\top} \mid Q=2\right]\right)^{-1}\left(B_1+B_2-B_3-B_4\right)
\end{align*}
for
\begin{align*}
B_1 & = \e\left[YD_{gh}(X,Y \mid 2)\psi(\Delta(X)) \Big| Q=2\right] \\
B_2 & = \e\left[Y\frac{\partial \Delta(X;0)}{\partial \theta} \dot{\psi}(\Delta(X)) \Big| Q=2\right] \\
B_3 & = \e\left[\frac{\partial \log g(X \mid 2;0)}{\partial \theta}  \psi(\Delta(X))\psi(\Delta(X))^{\top}\bar{\beta} \Big| Q=2\right] \\
B_4 & = \e\left[\frac{\partial \Delta(X;0)}{\partial \theta} \left(\psi(\Delta(X))\dot{\psi}(\Delta(X))^{\top} + \dot{\psi}(\Delta(X)) \psi(\Delta(X))^{\top}\right)\bar{\beta} \Big| Q=2\right]
\end{align*}
where we let
\[
D_{gh}(X,Y \mid q) = \frac{ \partial \log g(X \mid q;0)}{\partial \theta} + \frac{\partial \log h(Y \mid X,q;0)}{\partial \theta}, \quad q=0,1,2.
\]
On the other hand,
\[
\frac{\partial}{\partial \theta} \left(\frac{\int \indic(q \neq 2)f(q;0)g(x \mid q;0)\psi(\Delta(x;0)) d(\lambda_Q \times \lambda_{X \mid Q})(q,x)}{1-f(2;0)}\right) = A_1+A_2+A_3
\]
for
\begin{align*}
A_1 & = \e\left[\left(\frac{\partial \log f(Q;0)}{\partial \theta} + \frac{\partial \log g(X \mid Q;0)}{\partial \theta}\right)\psi(\Delta(X)) \Big| Q \neq 2\right] \\
A_2 & = \e\left[\frac{\partial \Delta(X;0)}{\partial \theta} \dot{\psi}(\Delta(X)) \Big| Q \neq 2\right] \\
A_3 & = \frac{\rho_2}{1-\rho_2} \frac{\partial \log f(2;0)}{\partial \theta} \e\left[\psi(\Delta(X)) \Big| Q \neq 2\right].
\end{align*}
Thus, to show~\eqref{eq:pathwise_derivative} it suffices to prove that
\[
\e[\psi(W;\bar{\tau},\eta)s(W;0)] = \bar{\alpha}^{\top}(B_1+B_2-B_3-B_4) + (A_1+A_2+A_3)^{\top}\bar{\beta}.
\]
where
\[
s(w;0) = \frac{\partial \log f(q;0)}{\partial \theta} + D_{gh}(x,y \mid q).
\]
is the score function corresponding to the parametric submodel at $\theta=0$.

To that end, we write $\psi(w;\bar{\tau},\eta) = \psi_1(w;\bar{\tau},\eta)+\psi_2(w;\bar{\tau},\eta)$ where
\begin{align*}
\psi_1(w;\bar{\tau},\eta) & = \frac{\indic(q \neq 2)}{1-\rho_2}\left(\bar{\mu}(x)-\bar{\tau}+\iota(w;\eta)\kappa(x;\eta)\right), \quad \text{and} \\
\psi_2(w;\bar{\tau},\eta) & = \frac{\indic(q = 2)}{\rho_2}(y-\bar{\mu}(x))\bar{\alpha}^{\top}\psi(\Delta(x)).
\end{align*}
The first-order conditions for $\bar{\mu}$ imply that
\[
\e[\psi_2(W;\bar{\tau},\eta)] = 0
\]
and so
\[
\e\left[\psi_2(W;\bar{\tau},\eta) \frac{\partial \log f(q;0)}{\partial \theta}\right] = \e\left[\psi_2(W;\bar{\tau},\eta) \frac{\partial \log f(2;0)}{\partial \theta}\right] = \frac{\partial \log f(2;0)}{\partial \theta} \e\left[\psi_2(W;\bar{\tau},\eta) \right]=0.
\]
Thus
\begin{align*}
& \e\left[\psi_2(W;\bar{\tau},\eta)s(W;0)\right] \\
& \quad = \e\left[\frac{\indic(Q = 2)}{\rho_2}(Y-\bar{\mu}(X))\bar{\alpha}^{\top}\psi(\Delta(X))D_{gh}(X,Y \mid 2)\right] \\
& \quad = \bar{\alpha}^{\top}\left(\e\left[YD_{gh}(X,Y \mid 2)\psi(\Delta(X)) \mid Q=2\right]-\e\left[\bar{\mu}(X)D_{gh}(X,Y \mid 2)\psi(\Delta(X)) \mid   Q=2\right]\right) \\
& \quad = \bar{\alpha}^{\top}(B_1-B_3)
\end{align*}
and the final equality uses the fact that
\begin{equation}
\label{eq:h_0}
\e\left[F(X,Q)\frac{\partial \log h(Y \mid X,Q;0)}{\partial \theta}\right] = 0
\end{equation}
for any square integrable function $F$,
applied to $F(x,q)=\indic(q=2)\bar{\mu}(x)\psi(\Delta(x))$.

Next, we have
\[
\e[\psi_1(W;\bar{\tau},\eta)s(W;0)]  = \e\left[\frac{\indic(Q \neq 2)}{1-\rho_2}(\bar{\mu}(X)-\bar{\tau})D_{fg}(Q,X)\right] + \e\left[\frac{\iota(W;\eta)\kappa(X;\eta)}{1-\rho_2}\frac{\partial \log h(Y \mid X,Q;0)}{\partial \theta}\right] \\
\]
where some terms cancel by~\eqref{eq:h_0} and observing by conditioning on $(X,Q)$ that both
\begin{align*}
\e\left[\frac{\iota(W;\eta)\kappa(X;\eta)}{1-\rho_2}\frac{\partial \log f(Q;0)}{\partial \theta}\right] & = \sum_{q=0}^1 \frac{\partial \log f(q;0)}{\partial \theta}\e\left[\frac{\kappa(X;\eta)\indic(Q=q)}{(r(X)-(1-q))(1-\rho_2)}(Y-m_q(X))\right]  \\
& = 0+0 = 0
\end{align*}
and
\begin{align*}
\e\left[\frac{\iota(W;\eta)\kappa(X;\eta)}{1-\rho_2} \frac{\partial \log g(X \mid Q;0)}{\partial \theta}\right] & = \sum_{q=0}^1 \e\left[\frac{\kappa(X;\eta)\indic(Q=q)}{(r(X)-(1-q))(1-\rho_2)}\frac{\partial \log g(X \mid Q;0)}{\partial \theta}(Y-m_q(X))\right]  \\
& = 0+0 = 0.
\end{align*}
We further compute
\begin{align*}
& \e\left[\frac{\indic(Q \neq 2)}{1-\rho_2}(\bar{\mu}(X)-\bar{\tau})D_{fg}(Q,X)\right] \\
& \quad = \left(\e[D_{fg}(Q,X)\psi(\Delta(X)) \mid Q \neq 2]\right)^{\top}\bar{\beta} - \e\left[\frac{\partial \log f(Q;0)}{\partial \theta} \Big| Q \neq 2\right]\left(\e[\psi(\Delta(X)) \mid Q \neq 2]\right)^{\top}\bar{\beta} \\
& \quad = \left(\e[D_{fg}(Q,X)\psi(\Delta(X)) \mid Q \neq 2]\right)^{\top}\bar{\beta} + \frac{\rho_2}{1-\rho_2} \frac{\partial \log f(2;0)}{\partial \theta} \left(\e[\psi(\Delta(X)) \mid Q \neq 2]\right)^{\top}\bar{\beta} \\
& \quad = (A_1+A_3)^{\top}\bar{\beta} 
\end{align*}
where the second equality follows by the law of total expectation since
\begin{align*}
0 = \e\left[\frac{\partial \log f(Q;0)}{\partial \theta}\right] = (1-\rho_2) \e\left[\frac{\partial \log f(Q;0)}{\partial \theta} \Big| Q \neq 2\right] + \rho_2 \frac{\partial}{\partial \theta} \log f(2;0).
\end{align*}
Finally,
\begin{align*}
& \e\left[\frac{\iota(W;\eta)\kappa(X;\eta)}{1-\rho_2}\frac{\partial \log h(Y \mid X,Q;0)}{\partial \theta}\right] \\
& \quad = \e\left[\frac{Y}{1-\rho_2}\left(\frac{\indic(Q=1)}{r(X)}-\frac{\indic(Q=0)}{1-r(X)}\right)\kappa(X;\eta)\frac{\partial \log h(Y \mid X,Q;0)}{\partial \theta}\right] \\
& \quad = \e\left[(1-\rho_2)^{-1}\left(\frac{\indic(Q=1)}{r(X)}-\frac{\indic(Q=0)}{1-r(X)}\right)\kappa(X;\eta)\e\left[Y\frac{\partial \log h(Y \mid X,Q,0)}{\partial \theta} \Big| X,Q\right]\right] \\
& \quad = (1-\rho_2)^{-1} \sum_{q=0}^1 \e\left[\frac{\Pr(Q=q \mid X)}{r(X)-(1-q)}\kappa(X;\eta) \frac{\partial \int y h(y \mid X,q;0)d\lambda_{Y \mid QX}(y)}{\partial \theta}\right] \\
& \quad = (1-\rho_2)^{-1}\e\left[\Pr(Q \neq 2 \mid X)\kappa(X;\eta)\frac{\partial \Delta(X;0)}{\partial \theta}\right] \\
& \quad = \e\left[\frac{\partial \Delta(X;0)}{\partial \theta}\left(\dot{\psi}(\Delta(X))^{\top}\bar{\beta} + \Lambda(X)m_2(X)\bar{\alpha}^{\top}\dot{\psi}(\Delta(X))\right) \Big| Q \neq 2\right] \\
& \quad - \e\left[\frac{\partial \Delta(X;0)}{\partial \theta}\Lambda(X)\bar{\alpha}^{\top}\left[\psi(\Delta(X))\dot{\psi}(\Delta(X))^{\top} + \dot{\psi}(\Delta(X))\psi(\Delta(X))^{\top}\right]\bar{\beta} \Big| Q \neq 2 \right] \\
& \quad = A_2^{\top}\bar{\beta} + \bar{\alpha}^{\top}(B_2-B_4)
\end{align*}
where the first equality follows from~\eqref{eq:h_0} applied to the functions $F(x,q)=\frac{m_r(x)\indic(Q=r)\kappa(x;\eta)}{(r(x)-q)(1-\rho_2)}$ for $r=0,1$,
and the fourth equality follows from observing that
\[
\Pr(Q=q \mid X) = \Pr(Q=q\mid X,Q \neq 2)P(Q \neq 2 \mid X) = (r(X)-(1-q))P(Q \neq 2 \mid X)(-1)^{q+1}.
\]
We conclude that indeed
\begin{align*}
\e[\psi(W;\bar{\tau},\eta)s(W;0)] & = \e[\psi_1(W;\bar{\tau},\eta)s(W,0)] + \e[\psi_2(W;\bar{\tau},\eta)s(W,0)] \\
& = \bar{\alpha}^{\top}(B_1+B_2-B_3-B_4) + (A_1+A_2+A_3)^{\top}\bar{\beta}
\end{align*} 
which concludes the proof of~\eqref{eq:pathwise_derivative}.

It remains to show that whenever the efficiency bound $V_{\eff}$ is finite,
$\rho_2 V_{\eff} \rightarrow \Sigma$ as $M \rightarrow \infty$ with $\rho_2 \rightarrow 0$ while the conditional distribution of $W=(Q,X,Y)$ given $\indic(Q = 2)$ remains constant.
To see this,
we write
\begin{align*}
V_{\eff} & = (1-\rho_2)^{-1} \e\left[(\bar{\mu}(X)-\bar{\tau}\iota(W;\eta)\kappa(X;\eta))^2 \mid Q \neq 2\right]  \\
& \quad + \rho_2^{-1} \e\left[(Y-\bar{\mu}(X))^2\bar{\alpha}^{\top}\psi(\Delta(X))\psi(\Delta(X))^{\top}\bar{\alpha} \mid Q=2\right]
\end{align*}
The assumption $V_{\eff}<\infty$ implies $0<\rho_2<1$ and both conditional expectations
\begin{align*}
E_1 & \equiv \e\left[(\bar{\mu}(X)-\bar{\tau}\iota(W;\eta)\kappa(X;\eta))^2 \mid Q \neq 2\right] \quad \text{ and } \\
E_2 & \equiv \e\left[(Y-\bar{\mu}(X))^2\bar{\alpha}^{\top}\psi(\Delta(X))\psi(\Delta(X))^{\top}\bar{\alpha} \mid Q=2\right]
\end{align*}
are finite.
These conditional expectations $E_1$ and $E_2$ remain fixed in our asymptotics.
Then as desired, evidently
\[
\rho_2 V_{\eff} = \frac{\rho_2}{1-\rho_2}E_1 + E_2 \rightarrow E_2 = \Sigma
\]
as $M \rightarrow \infty$ and $\rho_2 \rightarrow 0$.

\section{Efficiency bound for \texorpdfstring{$\bar{\tau}$}{our estimand} when \texorpdfstring{$\Delta(\cdot)$}{the treatment-control contrast} is known}
\label{app:eff_bound}

Here we present the semiparametric efficiency bound 
(under the nested formulation of our setting introduced in Section~\ref{sec:discussion}) for our estimand $\bar{\tau}$ when the treatment-control contrast $\Delta(\cdot)$ is known.
As discussed in Section~\ref{sec:discussion},
prior work on efficient estimation of projection estimands gives the semiparametric efficiency bound for the estimand $\bar{\theta}=(1-\rho_2)\bar{\tau}=\e[\indic(Q \neq 2)\bar{\tau}]$ when $\Delta(\cdot)$ is known as a special case.
In particular,
from Proposition 4 of~\citet{newey1994asymptotic},
we see that the efficient influence function (EIF) for $\bar{\theta}$, evaluated at a generic observation $w=(q,x,y)$, is
\[
\psi_{\bar{\theta}}(w) = \indic(q \neq 2)\bar{\mu}(x) -\bar{\theta} + \frac{\indic(q=2)(1-\rho_2)\bar{\alpha}^{\top}\psi(\Delta(x)))(y-\bar{\mu}(x))}{\rho_2}
\]
when $\Delta(\cdot)$ is known,
where $\bar{\alpha}$ is given by~\eqref{eq:alpha_nested}.
Then to get the EIF $\psi_{\bar{\tau}}(\cdot)$ of $\bar{\tau}=(1-\rho_2)^{-1}\bar{\theta}$ with $\Delta(\cdot)$ known,
we recall that the EIF for $1-\rho_2 = \e[\indic(Q \neq 2)]$ is simply
\[
\psi_{1-\rho_2}(w) = \indic(q \neq 2)-(1-\rho_2) = \rho_2-\indic(q=2).
\]
Then using the properties of influence functions as derivatives (e.g., Theorems 20.8 and 20.9 of~\citet{van2000asymptotic} or~\citet{kennedy2024semiparametric}),
we can apply the quotient rule to conclude
\begin{equation}
\label{eq:eif_Delta_known}
\psi_{\bar{\tau}}(w) = \frac{(1-\rho_2)\psi_{\bar{\theta}}(w)-\bar{\theta}\psi_{1-\rho_2}(w)}{(1-\rho_2)^2} = \frac{\indic(q \neq 2)}{1-\rho_2}(\bar{\mu}(x)-\bar{\tau}) + \frac{\indic(q=2)}{\rho_2}(y-\bar{\mu}(x))\bar{\alpha}^{\top}\psi(\Delta(x)).
\end{equation}
Note that this differs from the EIF in the case $\Delta$ must be estimated (Theorem~\ref{thm:efficiency}) by the omission of the term
\[
\frac{\indic(q \neq 2)}{1-\rho_2}\iota(w;\eta)\kappa(x;\eta).
\]

From~\eqref{eq:eif_Delta_known},
the efficiency bound (when $\Delta$ is known) is
\[
\e[\psi_{\bar{\tau}}(W)^2] = \frac{\e[(\bar{\mu}(X)-\bar{\tau})^2 \mid Q \neq 2]}{(1-\rho_2)} + \frac{\e[(Y-\bar{\mu}(X))^2\bar{\alpha}^{\top}\psi(\Delta(X))\psi(\Delta(X))^{\top}\bar{\alpha} \mid Q=2]}{\rho_2}.
\]
Assuming this is finite,
we observe that as $M \rightarrow \infty$ with $\rho_2 \rightarrow 0$ and the distribution of $(Q,X,Y)$ given $\indic(Q = 2)$ remaining constant, we have
\[
\rho_2\e[\psi_{\bar{\tau}}(W)^2] \rightarrow \e[(Y-\bar{\mu}(X))^2\bar{\alpha}^{\top}\psi(\Delta(X))\psi(\Delta(X))^{\top}\bar{\alpha} \mid Q=2] = \Sigma.
\]
In conjunction with Theorem~\ref{thm:efficiency}, this shows the claim at the end of Section~\ref{sec:discussion} that in this asymptotic regime,
the need to estimate $\Delta(\cdot)$ does not affect the efficiency bound.

\section{Nuisance function estimation for numerical studies}
\label{app:nuisance_estimation}

For the simulation study in Section~\ref{sec:1d_sim} with a one-dimensional covariate,
we estimate the observational treatment-control contrast $\Delta(\cdot)$ by applying the DR-learner~\citep{kennedy2023towards} to the observational dataset, 
with generalized additive models (GAMs) as implemented by the \texttt{gam} function in the \texttt{mgcv} package in R~\citep{mgcv} as base learners and the observational propensity score
\[
r(x) = \Pr_{\obs}(Z = 1 \mid X=x)
\]
estimated using a GAM with logit link.
For all of our GAM fits,
we use the default behavior for a ``smooth" term \texttt{s()} in the \texttt{gam} function (thin-plate regression splines with degrees of freedom chosen automatically via restricted maximum likelihood).
For the AIPSW and collaborative estimators,
we estimate the CATE $\mu(\cdot)$ using the super learner~\citep{superlearner} with $L_1$-penalized linear regression~\citep{tibshirani1996regression}, multivariate adaptive regression splines~\citep{friedman1991multivariate}, random forests~\citep{breiman2001random}, and support vector machines~\citep{cortes1995support} as the component learners,
with the observations $\{D_i\}_{i=1}^n$ as the response.
We incorporate the prognostic adjustment idea of~\citet{liao2025prognostic} to improve learning of the CATE $\mu(\cdot)$ from the small experimental dataset by including both $X$ and our estimates $\hat{\Delta}(X)$ of the treatment contrast $\Delta(X)$ as predictors in the super learner.
The odds ratio $q_{n,N}$ for the AIPSW estimator is estimated by fitting a GAM with logit link to the combined experimental and observational dataset,
with the response variable being the binary indicator of whether the observation is in the experimental dataset.
The odds ratio $g_{n,N}$ for the collaborative estimator is estimated in the same way,
except with $\hat{\mu}(X_i)$ as the single predictor (instead of $X_i$).
Finally, the nuisance function $k_{n,N}(\cdot)$ for the collaborative estimator is estimated by fitting a GAM (with identity link) to predict $\hat{\mu}(X_i)$ from the estimates $\hat{g}_{n,N}(X_i)$.
Since $X$ is univariate, we do not use cross-fitting for any of our nuisance function estimates for the simulations in Section~\ref{sec:1d_sim}.
In particular, we set $K=1$ in the algorithm described at the start of Section~\ref{sec:estimation}.

For the simulations with ten covariates in Section~\ref{sec:multivar_sim},
we learn $\Delta(\cdot)$ using a causal forest,
as implemented by the \texttt{causal\_forest} function with default hyperparameters in the \texttt{grf} package in R~\citep{grf}.
To compute the estimates $\hat{\Delta}(X_i)$ in the observational dataset (observations $i=n+1,\ldots,n+N$),
we use the out-of-bag predictions in lieu of cross-fit predictions.
The odds ratio $q_{n,N}(\cdot)$ for the AIPSW estimator is estimated using logistic regression (with additive terms for all 10 covariates),
as advocated in the semi-supervised learning literature~\citep{zhang2023double}.
The remaining nuisance functions $\mu(\cdot)$, $g_{n,N}(\cdot)$, and $k_{n,N}(\cdot)$ for the AIPSW and collaborative estimators are estimated in the same way as for the univariate simulations in Section~\ref{sec:1d_sim},
except that for $\mu(\cdot)$ we now implement cross-fitting with $K=5$ folds due to the dimensionality of the covariate space relative to the experimental dataset size $n=100$.

Finally, for the simulations based on the crop rotation data in Section~\ref{sec:data},
we also estimate $\Delta(\cdot)$ using a causal forest (though with only 200 trees to speed up the computation time).
We estimate the odds ratio $q_{n,N}$ for the AIPSW estimator using $L_1$-penalized logistic regression with potential predictors including linear terms for all covariates as well as interactions up to order two between latitude and longitude.
The remaining nuisance functions are estimated in the same way as for the multivariate simulations, as described in the preceding paragraph.

\end{document}